\newtheorem{theorem}{Theorem}
\newtheorem{algorithm}[theorem]{Algorithm}
\newtheorem{corollary}[theorem]{Corollary}
\newtheorem{lemma}[theorem]{Lemma}
\newtheorem{proposition}[theorem]{Proposition}
\newenvironment{proof}{ \textbf{Proof:} }{ \hfill $\Box$}
\newcommand{\figref}[1]{{Fig.}~\ref{#1}}
\def\bb0{{\mathbb{0}}}
\def\ba{{\mathbf{a}}}
\def\bb{{\mathbf{b}}}
\def\bff{{\mathbf{f}}}
\def\bg{{\mathbf{g}}}
\def\bh{{\mathbf{h}}}
\def\bm{{\mathbf{m}}}
\def\bn{{\mathbf{n}}}
\def\br{{\mathbf{r}}}
\def\bs{{\mathbf{s}}}
\def\bv{{\mathbf{v}}}
\def\bw{{\mathbf{w}}}
\def\bx{{\mathbf{x}}}
\def\bz{{\mathbf{z}}}
\def\b0{{\mathbf{0}}}
\def\bA{{\mathbf{A}}}
\def\bD{{\mathbf{D}}}
\def\bF{{\mathbf{F}}}
\def\bH{{\mathbf{H}}}
\def\bI{{\mathbf{I}}}
\def\bP{{\mathbf{P}}}
\def\bR{{\mathbf{R}}}
\def\bbE{{\mathbb{E}}}
\def\bbP{{\mathbb{P}}}
\def\cA{\mathcal{A}}
\def\cC{\mathcal{C}}
\def\cF{\mathcal{F}}
\def\cH{\mathcal{H}}
\def\cN{\mathcal{N}}
\def\cW{\mathcal{W}}
\def\sf0{{\mathsf{0}}}
\newcommand{\sref}[1]{{Section}~\ref{#1}}
\def\j{\mathrm{j}}
\begin{document}
\title{ Limited Feedback Hybrid Precoding for Multi-User Millimeter Wave Systems}
\author{Ahmed Alkhateeb, Geert Leus, and Robert W. Heath Jr.\thanks{Ahmed Alkhateeb and Robert W. Heath Jr. are with The University of Texas at Austin (Email: aalkhateeb, rheath,@utexas.edu). Geert Leus is with Delft University of Technology (Email: g.j.t.leus@tudelft.nl).} \thanks{This work is supported in part by the National Science Foundation under Grant No. 1218338 and 1319556, and by a gift from Huawei Technologies, Inc.}}
\maketitle

\begin{abstract}
Antenna arrays will be an important ingredient in millimeter wave (mmWave) cellular systems. A natural application of antenna arrays is simultaneous transmission to multiple users. Unfortunately, the hardware constraints in mmWave systems make it difficult to apply conventional lower frequency multiuser MIMO precoding techniques at mmWave. This paper develops low-complexity hybrid analog/digital precoding for downlink multiuser mmWave systems. Hybrid precoding involves a combination of analog and digital processing that is inspired by the power consumption of complete radio frequency and mixed signal hardware. The proposed algorithm configures hybrid precoders at the transmitter and analog combiners at multiple receivers with a small training and feedback overhead. The performance of the proposed algorithm is analyzed in the large dimensional regime and in single path channels. When the analog and digital precoding vectors are selected from quantized codebooks, the rate loss due to the joint quantization is characterized and insights are given into the performance of hybrid precoding compared with analog-only beamforming solutions. Analytical and simulation results show that the proposed techniques offer higher sum rates compared with analog-only beamforming solutions, and approach the performance of the unconstrained digital beamforming with relatively small codebooks.
\end{abstract}

\section{Introduction} \label{sec:intro}

The large bandwidths in the mmWave spectrum make mmWave communication desirable for wireless local area networking and also a promising candidate for future cellular systems \cite{Pi2011,Cov_Magazine,Rapp5G,ayach2013spatially,mmWave_Estimation_2013}. Achieving high quality communication links in mmWave systems requires employing large antenna arrays at both the access point or base station (BS) and the mobile stations (MS's) \cite{murdock201238,ayach2013spatially,Antenna2}. For efficient system performance, each BS needs to simultaneously serve a number of MS's. Multiplexing different data streams to different users requires some form of precoding be applied to generate the transmitted signal at the BS. In conventional lower frequency systems, this precoding was commonly done in the baseband to have a better control over the entries of the precoding matrix. Unfortunately, the high cost and power consumption of mixed signal components make fully digital baseband precoding unlikely with current semiconductor technologies \cite{ayach2013spatially,singh2009multi}. Further, the design of the precoding matrices is usually based on complete channel state information, which is difficult to achieve in mmWave systems due to (i) the large number of antennas which would require a huge training overhead and (ii) the small signal-to-noise ratio (SNR) before beamforming. Therefore, new multi-user precoding algorithms that (i) respect the mmWave hardware constraints and (ii) require much less complexity need to be developed for mmWave systems.

In single-user mmWave systems, analog beamforming, which controls the phase of the signal transmitted at each antenna via a network of analog phase shifters and is implemented in the radio frequency (RF) domain, was proposed instead of the baseband solutions \cite{Wang1,Multilevel,Tsang,sayeed2007maximizing,brady2013beamspace}. This was also adopted in commercial indoor mmWave communication standards like IEEE 802.11ad \cite{.11ad} and IEEE 802.15.3c \cite{IEEE3c}. In \cite{Wang1,Multilevel}, adaptive beamforming algorithms and multi-resolution codebooks were developed by which the transmitter and receiver jointly  design their analog beamforming vectors. In \cite{Tsang}, unique signatures are assigned to the different training beamforming vectors and used to minimize the training overhead. In \cite{sayeed2007maximizing,brady2013beamspace}, beamspace multi-input multi-output (MIMO) was introduced in which discrete fourier transform (DFT) beamforming vectors are used to direct the transmitted signals towards the subspaces that asymptotically maximize the received signal power with large numbers of antennas. Analog beamformers as in \cite{.11ad,IEEE3c,Wang1,Multilevel,Tsang,sayeed2007maximizing,brady2013beamspace} are subject to additional constraints, for example,
the phase shifters might be digitally controlled and have only quantized phase values and adaptive gain control might not be implemented. These constraints limit the potential of analog-only beamforming solutions relative to baseband precoding, as they limit the ability to make sophisticated processing, for example to manage interference between users.

To multiplex several data streams and perform more accurate beamforming, hybrid precoding was proposed \cite{ayach2013spatially,Multi_beam,alkhateeb}, where the processing is divided between the analog and digital domains. In \cite{ayach2013spatially}, the sparse nature of the mmWave channels was exploited to develop low-complexity hybrid precoding algorithms using the algorithmic concept of basis pursuit assuming the availability of channel knowledge. In \cite{Multi_beam}, low-complexity hybrid beamforming algorithms were proposed for single-user single-stream  MIMO-OFDM systems with the objective of maximizing either the received signal strength or the sum-rate over different sub-carriers. In \cite{alkhateeb}, a hybrid precoding algorithm that requires only partial knowledge about the mmWave channels  was devised. The hybrid precoding algorithms in \cite{ayach2013spatially,Multi_beam,alkhateeb}, though, were designed to obtain either diversity or spatial multiplexing gain from single-user channels, which can support a limited number of streams \cite{Rapp5G}. In multi-user systems, the digital precoding layer of hybrid precoding gives more freedom in designing the precoders, compared with analog-only solutions, which can be exploited to reduce the interference between users. Hence, developing low-complexity hybrid precoding algorithms for multi-user mmWave systems is of special interest.

Pre-precoding processing has been investigated for other systems \cite{Zhang,Venkat,Adhikary}. In \cite{Zhang}, the joint analog-digital precoder design problem was studied for both diversity and spatial multiplexing systems. In \cite{Venkat}, hybrid analog/digital precoding algorithms were developed to minimize the received signal's mean-squared error in the presence of interference when phase shifters with only quantized phases are available.  In \cite{Adhikary}, two-layer beamforming algorithms were proposed to group the users and reduce the channel feedback overhead in massive MIMO systems. The approaches in \cite{Zhang,Venkat,Adhikary}, however, were not designed specifically for mmWave systems as they did not consider the mmWave-related hardware constraints, and did not leverage mmWave channel characteristics to realize low-complexity solutions.

In this paper, we develop a low-complexity yet efficient hybrid analog/digital precoding algorithm for downlink multi-user mmWave systems. The proposed algorithm is general for arbitrary known array geometries, and assumes the availability of only a limited feedback channel between the BS and MS's. The main contributions of the paper can be summarized as follows:
\begin{itemize}
\item{Developing a hybrid precoding/combining algorithm for multi-user mmWave systems. Our model assumes that the MS's employ analog-only combining while the BS performs hybrid analog/digital precoding where the number of RF chains is at least as large as the number of MS's. The proposed algorithm is designed to reduce the training and feedback overhead while achieving performance close to that of unconstrained solutions.}
\item{Analyzing the performance of the proposed algorithm in special cases: (i) when the channels are single-path, and (ii) when the number of transmit and receive antennas are very large, which are relevant for mmWave systems.}
\item{Characterizing the average rate loss due to joint analog and digital codebook quantization, and identifying the cases at which large hybrid precoding gains exist compared with analog-only beamforming solutions.}
\end{itemize}
The proposed algorithm and performance bounds are evaluated by simulations and compared with both analog-only beamforming solutions and digital unconstrained precoding schemes. The results indicate that with a relatively small feedback and training overhead, the proposed hybrid precoding algorithm achieves good performance thanks to the sparse nature of the channel and the large number of antennas used by the BS and MS's.

The rest of the paper is organized as follows. In \sref{sec:Model}, the system and channel models are described. In \sref{sec:Form}, the multi-user hybrid precoding/combing design problem is formulated and the large feedback and training overhead associated with the direct solution is explained. The proposed low-complexity solution is then presented in \sref{sec:MU_Alg}, and analyzed in \sref{sec:Performance} assuming the availability of infinite feedback channels and continuous-angle phase shifters. The rate loss due to quantization and limited feedback channels in then characterized in \sref{sec:Finite}. Simulation results are presented  in \sref{sec:Results} before concluding the paper in \sref{sec:conclusion}.

We use the following notation throughout this paper: $\bA$ is a matrix, $\ba$ is a vector, $a$ is a scalar, and $\cA$ is a set. $\|\bA \|_F$ is the Frobenius norm of $\bA$, whereas $\bA^\mathrm{T}$, $\bA^*$, $\bA^{-1}$, are its transpose, Hermitian, and inverse respectively. $\bI$ is the identity matrix, and $\cN(\bm,\bR)$ is a complex Gaussian random vector with mean $\bm$ and covariance $\bR$. $\mathbbm{1}_{(.)}$ is an indicator function. $\bbE\left[\cdot\right]$ is used to denote expectation.

\section{System Model} \label{sec:Model}

\begin{figure} [t]
\centerline{
\includegraphics[scale=.53]{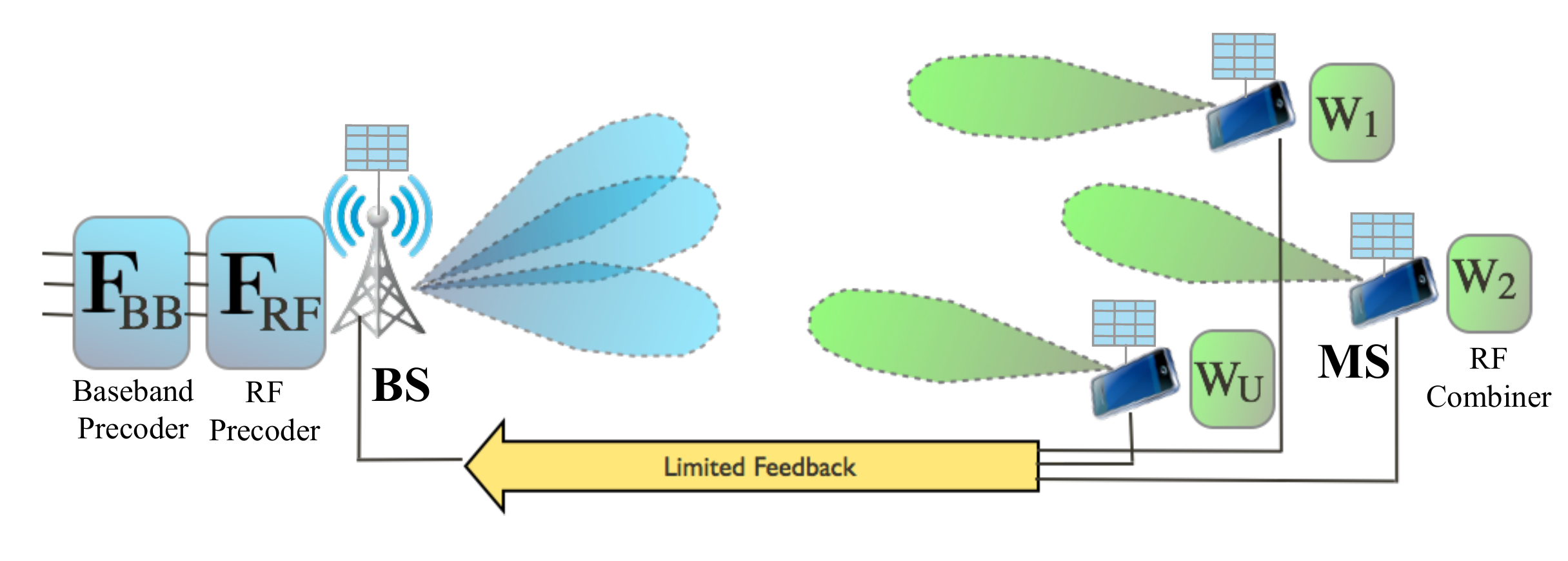}
}
\caption{A multi-user mmWave downlink system model, in which a BS uses hybrid analog/digital precoding and a large antenna array to serve $U$ MSs. Each MS employs analog-only combining and has a limited feedback channel to the BS.}
\label{fig:LimitedFB}
\end{figure}

Consider the multi-user mmWave system shown in \figref{fig:LimitedFB}. A base station  with $N_\mathrm{BS}$ antennas and $N_\mathrm{RF}$ RF chains is assumed to communicate with $U$ mobile stations. Each MS is equipped with $N_\mathrm{MS}$ antennas as depicted in \figref{fig:Hybrid}. We focus on the multi-user beamforming case in which the BS communicates with every MS via \textit{only one stream}. Therefore, the total number of streams $N_\mathrm{S}=U$. Further, we assume that the maximum number of users that can be simultaneously served by the BS equals the number of BS RF chains, i.e., $U \leq N_\mathrm{RF}$. This is motivated by the spatial multiplexing gain of the described multi-user hybrid precoding system, which is limited by $\min\left(N_\mathrm{RF},U\right)$ for $N_\mathrm{BS} > N_\mathrm{RF}$. For simplicity, we will also assume that the BS will use $U$ out of the $N_\mathrm{RF}$ available RF chains to serve the $U$ users.

On the downlink, the BS applies a $U \times U$ baseband precoder $\bF_\mathrm{BB}=\left[\bff_1^\mathrm{BB}, \bff_2^\mathrm{BB}, ..., \bff_U^\mathrm{BB}\right]$ followed by an $N_\mathrm{BS} \times U$ RF precoder, $\bF_\mathrm{RF}=\left[\bff_1^\mathrm{RF}, \bff_2^\mathrm{RF}, ..., \bff_U^\mathrm{RF}\right]$. The sampled transmitted signal is therefore
\begin{equation}
\bx=\bF_\mathrm{RF} \bF_\mathrm{BB} \bs,
\label{eq:signal_transmitted}
\end{equation}
where $\bs=[s_1, s_2, ..., s_U]^\mathrm{T}$ is the $U \times 1$ vector of transmitted symbols, such that $\bbE\left[\bs\bs^*\right] = \frac{P}{U}\bI_U$, and $P$ is the average total transmitted power. We assume equal power allocation among different users' streams. Since $\bF_\mathrm{RF}$ is implemented using analog phase shifters, its entries are of constant modulus. We normalize these entries to satisfy $\left|\left[\bF_\mathrm{RF}\right]_{m,n}\right|^2=N_\mathrm{BS}^{-1}$. Further, we assume that the angles of the analog phase shifters are quantized and have a finite set of possible values. With these assumptions, $\left[\bF_\mathrm{RF}\right]_{m,n}= \frac{1}{\sqrt{N_\mathrm{BS}}} e^{\j \phi_{m,n}}$, where $\phi_{m.n}$ is a quantized angle. The angle quantization assumption is discussed in more detail in \sref{sec:Form}. The total power constraint is enforced by normalizing $\bF_\mathrm{BB}$ such that  $\|\bF_\mathrm{RF} \bF_\mathrm{BB}\|_F^2=U$.

\begin{figure} [t]
\centerline{
\includegraphics[width=.8\columnwidth]{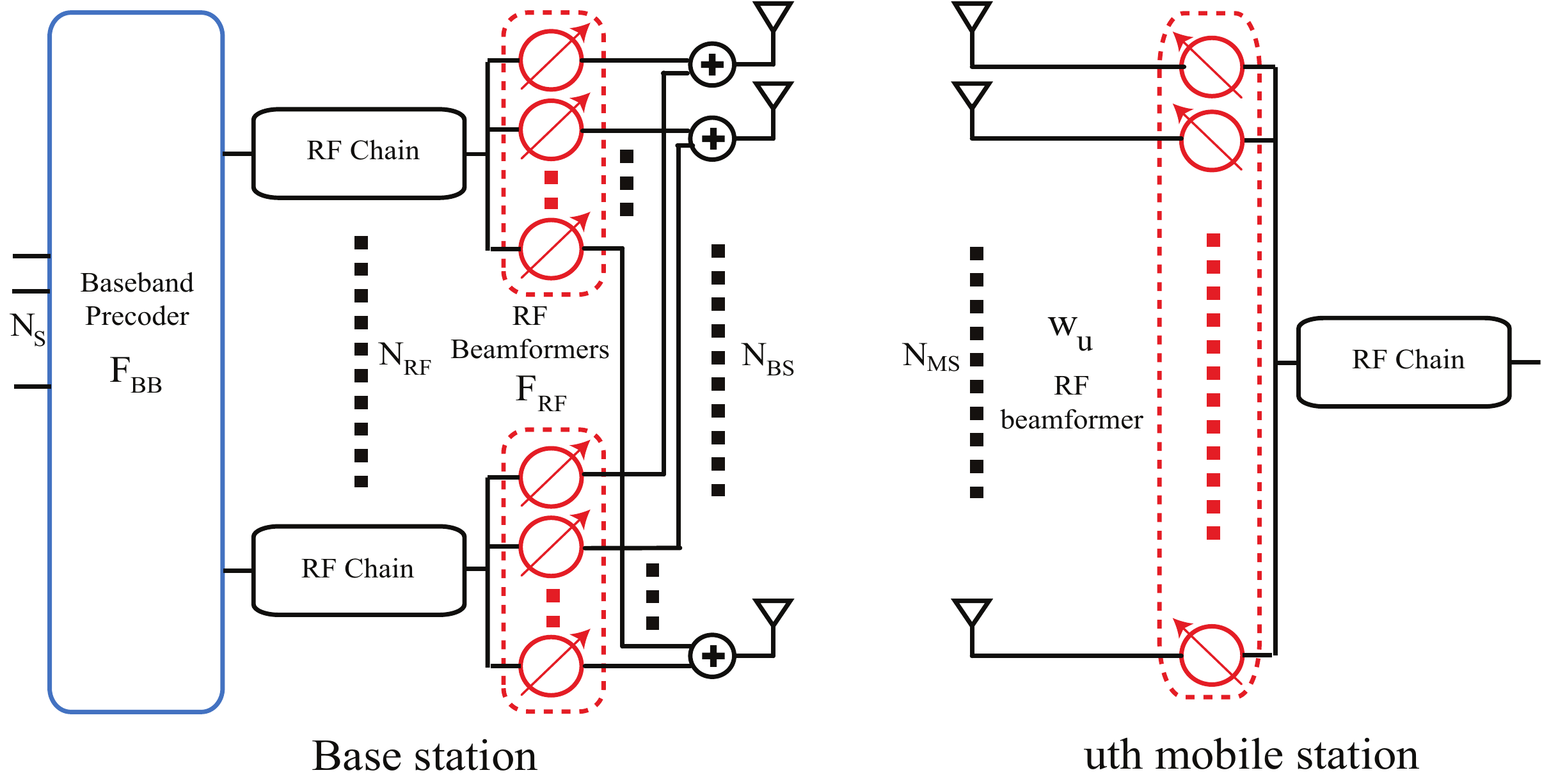}
}
\caption{A BS with hybrid analog/digital architecture communicating with the $u$th MS that employs analog-only combining.}
\label{fig:Hybrid}
\end{figure}


For simplicity, we adopt a narrowband block-fading channel model as in \cite{Xia2008,ayach2013spatially,brady2013beamspace,mmWave_Estimation_2013} in which the $u$th MS observes the received signal as
\begin{equation}
\br_{u}=\bH_{u}\sum_{n=1}^{U}{\bF_\mathrm{RF} \bff^\mathrm{BB}_{n} s_{n}} + \bn_{u},
\label{eq:received_signal}
\end{equation}
where $\bH_{u}$ is the $N_\mathrm{MS} \times N_\mathrm{BS}$ matrix that represents the mmWave channel between the BS and the $u$th MS, and $\bn_{u} \sim \cN (\boldsymbol{0}, \sigma^2 \bI )$ is the Gaussian noise corrupting the received signal.

At the $u$th MS, the RF combiner $\bw_u$ is used to process the received signal $\br_u$:
\begin{equation}
y_u= \bw_u^*  \bH_{u}  \sum_{n=1}^{U}{\bF_\mathrm{RF} \bff^\mathrm{BB}_{n} s_{n}} + \bw_u^* \bn_u,
\label{eq:combined_signal}
\end{equation}
where $\bw_u$ has similar constraints as the RF precoders, i.e., the constant modulus and quantized angles constraints. In this work, we assume that only analog (RF) beamforming is used at the MS's as they will likely need cheaper hardware with lower power consumption.

MmWave channels are expected to have limited scattering~\cite{Ben-Dor,Rapp5G,raghavan2009multi}. To incorporate this effect, we adopt a geometric channel model with $L_{u}$ scatterers for the channel of user $u$. Each scatterer is assumed to contribute a single propagation path between the BS and MS~\cite{ayach2013spatially,raghavan2009multi}. The adopted geometrical channel model can also be transformed into the virtual channel model \cite{Virtual}. The virtual channel model simplifies the generalization for larger angle spreads by incorporating spatial spreading functions as will be briefly discussed in \sref{subsec:Perfect_large}. Under this model, the channel $\bH_{u}$ can be expressed as
\begin{align}
\bH_u =\sqrt{\frac{N_\mathrm{BS} N_\mathrm{MS}} { L_{u}}} \sum_{\ell=1}^{L_{u}}\alpha_{u,\ell} \ba_\mathrm{MS}\left(\theta_{u,\ell}\right) \ba^*_\mathrm{BS} \left(\phi_{u,\ell} \right),
\label{eq:channel_model}
\end{align}
where $\alpha_{u,\ell}$ is the complex gain of the $\ell^\mathrm{th}$ path, including the path-loss, with $\mathbb{E}\left[|\alpha_{u,\ell}|^2\right]=\bar{\alpha}$. The variables $\theta_{u,\ell}$, and $\phi_{u,\ell} \in [0, 2\pi]$ are the $\ell^\mathrm{th}$ path's angles of arrival and departure (AoAs/AoDs) respectively. Finally, $\ba_\mathrm{BS}\left(\phi_{u,\ell}\right)$ and $\ba_\mathrm{MS}\left(\theta_{u,\ell}\right)$ are the antenna array response vectors of the BS and $u$th MS respectively. The BS and each MS are assumed to know the geometry of their antenna arrays. While the algorithms and results developed in the paper can be applied to arbitrary antenna arrays, we use uniform planar arrays (UPAs) and uniform linear arrays (ULAs) in the simulations of \sref{sec:Results}. If a ULA is assumed, $\ba_\mathrm{BS}\left(\phi_{u,\ell}\right)$ can be defined as
\begin{align}
\begin{split}
\ba_\mathrm{BS}\left(\phi_{u,\ell}\right) =  \frac{1}{\sqrt{N_\mathrm{BS}}} \left[ 1, e^{\j{\frac{2\pi}{\lambda}}d\sin\ \left( \phi_{u,\ell} \right)}, ..., e^{\j\left(N_\mathrm{BS} -1\right){\frac{2\pi}{\lambda}}d\sin \left( \phi_{u,\ell} \right)} \right]^\mathrm{T},  \label{eq:steering}
\end{split}
\end{align}
where $\lambda$ is the signal wavelength, and $d$ is the distance between antenna elements. The array response vectors at the MS, $\ba_\mathrm{MS}\left(\theta_{u,\ell}\right)$, can be written in a similar fashion.

\section{Problem Formulation} \label{sec:Form}
Our main objective is to efficiently design the analog (RF) and digital (baseband) precoders at the BS and the analog combiners at the MS's to maximize the sum-rate of the system.

Given the received signal at the $u$th MS in \eqref{eq:received_signal} which is then processed using the RF combiner $\bw_u$, the achievable rate of user $u$ is
\begin{equation}
R_u=\log_2\left(1+\frac{\frac{P}{U} \left|\bw_u^* \bH_u \bF_\mathrm{RF} \bff_u^\mathrm{BB}\right|^2}{\frac{P}{U} \sum_{n\neq u} { \left|\bw_u^* \bH_u \bF_\mathrm{RF} \bff_n^\mathrm{BB}\right|^2 + \sigma^2} }\right). \label{eq:Rate}
\end{equation}
The sum-rate of the system is then  $R_\mathrm{sum}=\sum_{u=1}^U R_u$.

Due to the constraints on the RF hardware, such as the availability of only quantized angles for the RF phase shifters, the analog beamforming/combining vectors can take only certain values. Hence, these vectors need to be selected from finite-size codebooks. There are different models for the RF beamforming codebooks, two possible examples are
\begin{enumerate}
\item{\textbf{General quantized beamforming codebooks} Here, the codebooks are designed to satisfy some particular properties, e.g., maximizing the minimum distance between the codebook vectors as in Grassmannian codebooks. These codebooks are usually designed for rich channels and, therefore, attempt a uniform quantization on the space of beamforming vectors. These codebooks were commonly used in traditional MIMO systems.}
\item{\textbf{Beamsteering codebooks} The beamforming vectors, here, are spatial matched filters for the single-path channels. As a result, they have the same form of the array response vector and can be parameterized by a simple angle. Let $\cF$ represents the RF beamforming codebook, with cardinality $\left|\cF\right|=N_\mathrm{Q}$. Then, in the case of beamsteering codebooks, $\cF$ consists of the vectors $\ba_\mathrm{BS}\left(\frac{2 \pi k_\mathrm{Q}}{N_\mathrm{Q}}\right)$, for the variable $k_\mathrm{Q}$ taking the values $0, 1, 2$, and $N_\mathrm{Q}-1$. The RF combining vectors codebook $\cW$ can be similarly defined.}
\end{enumerate}

Motivated by the good performance of single-user hybrid precoding algorithms \cite{ayach2013spatially,mmWave_Estimation_2013} which relied on RF beamsteering vectors, and by the relatively small size of these codebooks which depend on single parameter quantization, we will adopt the beamsteering codebooks for the analog beamforming vectors. While the problem formulation and proposed algorithm in this paper are general for any codebook, the performance evaluation of the proposed algorithm done in Sections \ref{sec:Performance}-\ref{sec:Finite} depends on the selected codebook. For future work, it is of interest to evaluate the performance of the proposed hybrid precoding algorithm with other RF beamforming codebooks.

If the system sum-rate is adopted as a performance metric, the precoding design problem is then to find $\bF_\mathrm{RF}^\star$, $\left\{\bff_u^{ \star \mathrm{BB}}\right\}_{u=1}^U$ and $\left\{\bw_u^\star\right\}_{u=1}^U$ that solve
\begin{align}
\begin{split}
\left\{\bF_\mathrm{RF}^\star, \left\{\bff_u^{ \star \mathrm{BB}}\right\}_{u=1}^U, \left\{\bw_u^\star\right\}_{u=1}^U\right\} = & \arg\max {\sum_{u=1}^{U} \log_2\left(1+\frac{\frac{P}{U} \left|\bw_u^* \bH_u \bF_\mathrm{RF} \bff_u^\mathrm{BB}\right|^2}{\frac{P}{U} \sum_{n\neq u} { \left|\bw_u^* \bH_u \bF_\mathrm{RF} \bff_n^\mathrm{BB}\right|^2 + \sigma^2} }\right)} \\
& \mathrm{s.t.} \hspace{23 pt}\left[\bF_\mathrm{RF}\right]_{:,u} \in \cF, u=1,2,..., U, \\
& \hspace{40pt}  \bw_u \in \cW, u=1,2,..., U, \\
& \hspace{40pt}  \|\bF_\mathrm{RF} \left[\bff_1^\mathrm{BB}, \bff_2^\mathrm{BB}, ..., \bff_U^\mathrm{BB} \right]\|_F^2=U.
\end{split} \label{eq:Opt}
\end{align}

The problem in \eqref{eq:Opt} is a mixed integer programming problem. Its solution requires a search over the entire $\cF^U \times \cW^U $ space of all possible $\bF_\mathrm{RF}$ and $\left\{\bw_u\right\}_{u=1}^U$ combinations. Further, the digital precoder $\bF_\mathrm{BB}$ needs to be jointly designed with the analog beamforming/combining vectors. In practice, this may require the feedback of the channel matrices $\bH_u, u=1, 2, ..., U$, or the effective channels, $\bw_u^* \bH_u \bF_\mathrm{RF}$. Therefore, the solution of \eqref{eq:Opt} requires large training and feedback overhead. Moreover, the optimal digital linear precoder is not known in general even without the RF constraints, and only iterative solutions exist \cite{Coordinated, Mazzarese}. Hence, the direct solution of this sum-rate maximization problem is neither practical nor tractable.

Similar problems to \eqref{eq:Opt} have been studied before in literature, but with baseband (not hybrid) precoding and combining \cite{Coordinated, Mazzarese,spencer2004zero,Boccardi_BD,Love_feedback,Jindal_Comb,Trivellato}. The main directions of designing the precoders/combiners in \cite{Coordinated, Mazzarese,spencer2004zero,Boccardi_BD,Jindal_Comb,Trivellato} can be summarized as follows.
\begin{itemize}
\item{\textbf{Iterative Coordinated Beamforming Designs}
     The general idea of these algorithms is to iterate between the design of the precoder and combiners in multi-user MIMO downlink systems, with the aim of converging to a good solution \cite{Coordinated, Mazzarese}. These algorithms, however, require either the availability of global channel knowledge at the transmitter, or the online BS-MS iterations to build the precoders and combiners. In mmWave systems, the application of coordinated beamforming is generally difficult as feeding the large mmWave channel matrix back to the BS requires a huge feedback overhead. Moreover, coordinated beamforming usually depends on using matching vectors at the MS's which can not be perfectly done with hybrid analog/digital architectures due to the hardware limitations on the analog precoders. Further, the convergence of coordinated beamforming has been established only for digital precoders \cite{Coordinated, Mazzarese}, and the extension to hybrid precoders has not yet been studied.}

\item{ \textbf{Non-iterative Designs with Channel State Information at the Transmitter} To avoid the design complexity associated with iterative methods, some non-iterative sub-optimal algorithms, like block diagonalization, were proposed \cite{spencer2004zero,Boccardi_BD}. Block diagonalization, however,requires global channel knowledge at the transmitter, which is difficult to achieve at mmWave systems. Further, the hardware constraints on the analog (or hybrid) precoding make it difficult to  exactly design the pre-processing matrix to have no  multi-user interference.}

\item{\textbf{Non-iterative Designs with Channel State Information at the Receiver} The main idea of these schemes is to first combine the MIMO channel at each receiver according to a certain criterion. Then, each each user quantizes its effective channel based on a pre-defined codebook, and feeds it back to the BS which uses it to construct its multi-user precoding matrix \cite{Jindal_Comb,Trivellato}. The application of these precoding/combining algorithms in mmWave systems is generally difficult because of the large dimensions of the mmWave channel matrix which makes the assumption of its availability at the MS's difficult to achieve in practice. Further, the hardware constraints make the direct application of the combining vector design schemes in \cite{Trivellato,Jindal_Comb} generally infeasible. }
\end{itemize}

Given the practical difficulties associated with applying the prior precoding/combining algorithms in mmWave systems, we propose a new mmWave-suitable multi-user MIMO beamforming algorithm in \sref{sec:MU_Alg}. Our proposed algorithm is developed to achieve good performance compared with the solution of \eqref{eq:Opt}, while requiring (i) low training overhead and (ii) small feedback overhead. After explaining the developed algorithm in \sref{sec:MU_Alg}, its performance is analyzed in \sref{sec:Performance} assuming infinite-resolution feedback and neglecting channel estimation errors. The performance degradations due to limited feedback are then analyzed in \sref{sec:Finite}.

\section{Two-stage Multi-user Hybrid Precoding Algorithm} \label{sec:MU_Alg}
The additional challenge in solving (7), beyond the usual coupling between precoders and combiners \cite{Coordinated, Mazzarese,spencer2004zero,Boccardi_BD,Jindal_Comb,Trivellato}, is the splitting of the precoding operation into two different domains, each with different constraints. The main idea of the proposed algorithm is to divide the calculation of the precoders into two stages. In the first stage, the BS RF precoder and the MS RF combiners are jointly designed to maximize the desired signal power of each user, neglecting the resulting interference among users. In the second stage, the BS digital precoder is designed to manage the multi-user interference.

\begin{algorithm} [!t]                     
\caption{Two-Stage Multi-user Hybrid Precoders Algorithm}          
\label{alg:MU_Precoding}                           
\begin{algorithmic} 
    \State \textbf{Input:} $\cF$ BS RF beamforming codebook of size $\left|\cF\right|=2^{B^\mathrm{BS}_\mathrm{RF}}$
    \State \hspace{28pt} $\cW$ MS RF beamforming codebook of size $\left|\cW\right|=2^{B^\mathrm{MS}_\mathrm{RF}}$
    \State \textbf{First stage:} Single-user RF beamforming/combining design
    \State  For each MS $u, u=1, 2, ..., U$
    \State \hspace{20pt} The BS and MS $u$ select  $\bv_u^\star$ and $\bg_u^\star$ that solve
    \State \hspace{80pt} $\left\{\bg_u^\star, \bv_u^\star\right\}=\displaystyle{\operatorname*{\arg max}_{ \substack{\forall \bg_u \in \cW \\ \forall \bv_u \in \cF }}}{\|\bg_u^* \bH_u \bv_u\|} $
    \State \hspace{20pt} MS $u$ sets $\bw_u=\bg_u^\star$
    \State BS sets $\bF_\mathrm{RF}=[\bv_1^\star, \bv_2^\star, ..., \bv_U^\star]$
    \State \textbf{Second stage:} Multi-user digital precoding design
    \State For each MS $u, u=1, 2, ..., U$
    \State \hspace{20pt} MS $u$ estimates its effective channel $\overline{\bh}_u=\bw_u^* \bH_u \bF_\mathrm{RF}$
    \State \hspace{20pt} MS $u$ quantizes $\overline{\bh}_u$ using a codebook $\cH$ of size $2^{B_\mathrm{BB}}$ and feeds back $\widehat{\bh}_u$ where
    \State $\hspace{80 pt} \widehat{\bh}_u=\displaystyle{\operatorname*{\arg max}_{\hat{\bh}_u \in \cH}}{\|\overline{\bh}_u^* \widehat{\bh}_u\|} $
    \State BS designs $\bF_\mathrm{BB}= {\widehat{\bH}}^*\left(\widehat{\bH} {\widehat{\bH}}^*\right)^{-1}$ with $\widehat{\bH}=\left[\widehat{\bh}_1^\mathrm{T},..., \widehat{\bh}_U^\mathrm{T}\right]^\mathrm{T}$
    \State $\bff_u^\mathrm{BB}= \frac{\bff_u^\mathrm{BB}}{\left\|\bF_\mathrm{RF} \bff_u^\mathrm{BB}\right\|_F}, u=1, 2, ..., U$
    \end{algorithmic}
\end{algorithm}

Algorithm \ref{alg:MU_Precoding} can be summarized as follows. In the first stage, the BS and each MS $u$ design the RF beamforming and combining vectors, $\bff^\mathrm{RF}_u$ and $\bw_u$, to maximize the desired signal power for user $u$, and neglecting the other users' interference. As this is the typical single-user RF beamforming design problem, efficient beam training algorithms developed for single-user systems such as \cite{Wang1,Multilevel}, which do not require explicit channel estimation and have a low training overhead, can be used to design the RF beamforming/combining vectors.

In the second stage, the BS trains the effective channels, $\overline{\bh}_u=\bw_u^* \bH_u \bF_\mathrm{RF}, u=1, 2, ..., U$, with the MS's. Note that the dimension of each effective channel vector is $U \times 1$ which is much less than the original channel matrix. This is not the case for the algorithms developed in \cite{Jindal_Comb,Trivellato} in which the effective channels have larger $N_\mathrm{BS} \times 1$ dimensions. Then, each MS $u$ quantizes its effective channel using a codebook $\cH$, and feeds the index of the quantized channel vector back to the BS with $B_\mathrm{BB}$ bits. Finally, the BS designs its zero-forcing digital precoder based on the quantized channels. Thanks to the narrow beamforming and the sparse mmWave channels, the effective MIMO channel is expected to be well-conditioned \cite{MassiveMIMO_Conv, Conjugate_BF}, which makes adopting a simple multi-user digital beamforming strategy like zero-forcing capable of achieving near-optimal performance \cite{ZF_Optmality}, as will be shown in Sections \ref{sec:Performance}-\ref{sec:Finite}.

Both the separate and joint designs of the analog and digital precoders were investigated before for single-user mmWave systems. For example, the work in \cite{Multi_beam} considered a single-user single-stream MIMO-OFDM system, where the analog and digital precoders were sequentially designed to maximize either the received signal strength or the sum-rate over different frequency sub-carriers. Alternatively, the analog and digital precoders were jointly designed in \cite{ayach2013spatially,Multi_beam,alkhateeb} to maximize the rate of single-user systems. In this paper, we consider a different setup which is multi-user downlink transmission. Therefore, the objective of the hybrid analog/digital beamforming in our work is different than that in \cite{ayach2013spatially,Multi_beam,alkhateeb} as we need to manage the multi-user interference as well. This leads to a completely different analysis.

In the next two sections, we analyze the performance of the proposed multi-user hybrid  precoding algorithm in different settings. For this analysis, we adopt the beamsteering codebook for the design of the analog beamforming/combining vectors. We also assume that the effective channels in the second stage of Algorithm \ref{alg:MU_Precoding} are quantized using a random vector quantization (RVQ) codebook. RVQ simplifies the analytical performance analysis of the proposed algorithm and allows leveraging some results from the limited feedback MIMO literature \cite{Jindal,Mazzarese,Love_feedback}.
\section{Performance Analysis with Infinite-Resolution Codebooks} \label{sec:Performance}
The analysis of hybrid precoding is non-trivial due to the coupling between analog and digital precoders. Therefore, we will study the performance of the proposed algorithm in two cases: With single-path channels and with large numbers of antennas. These  cases are of special interest as mmWave channels are likely to be sparse, i.e., only a few paths exist \cite{Rapp5G}, and both the BS and MS need to employ large antenna arrays to have sufficient received power \cite{ayach2013spatially}. Further, the analysis of these special cases will give useful insights into the performance of the proposed algorithms in more general settings which will also be confirmed by the simulations in \sref{sec:Results}.

In this section, we analyze the achievable rates of the proposed algorithm assuming perfect effective channel knowledge and supposing that the angles of the RF beamsteering vectors can take continuous values, i.e., we assume that both the RF codebooks ($\cF$ and $\cW$) and the RVQ  codebook $\cH$ are of infinite size. In \sref{sec:Finite}, we will study how limited feedback and finite codebooks affect the rates achieved by the developed hybrid precoding algorithm.
\subsection{Single-Path Channels} \label{subsec:single_path}

In this section, we consider the case when  $L_u=1, u=1, 2, ..., U$. For ease of exposition, we will omit the subscript $\ell$ in the definition of the channel parameters in \eqref{eq:channel_model}. The following theorem characterizes a lower bound on the achievable rate by each MS when Algorithm \ref{alg:MU_Precoding} is used to design the hybrid precoders at the BS and RF combiners at the MS's.

\begin{theorem}
Let Algorithm \ref{alg:MU_Precoding} be used to design the hybrid precoders and RF combiners described in \sref{sec:Model} under the following assumptions
\begin{enumerate}
\item{All channels are single-path, i.e., $L_u=1, u=1, 2, ..., U$.}
\item{The RF precoding vectors $\bff_u^\mathrm{RF}, u=1,2, ..., U$, and the RF combining vectors $\bw_u, u=1, 2, ..., U$ are beamsteering vectors with continuous angles.}
\item{Each MS $u$ perfectly knows its channel $\bH_u$, $u=1, 2, ..., U$. }
\item{The BS perfectly knows the effective channels $\overline{\bh}_u, u=1, 2, ..., U$.}
\end{enumerate}
and define the $N_\mathrm{BS} \times U$  matrix $\bA_\mathrm{BS}$ to gather the BS array response vectors associated with the $U$ AoDs, i.e.,  $\bA_\mathrm{BS}=\left[\ba_\mathrm{BS}\left(\phi_1\right), \ba_\mathrm{BS}\left(\phi_2\right), ..., \ba_\mathrm{BS}\left(\phi_U\right) \right]$, with maximum and minimum singular values $\sigma_{\mathrm{max}}(\bA_\mathrm{BS})$ and $\sigma_{\mathrm{min}}(\bA_\mathrm{BS})$, respectively.
Then, the achievable rate of user $u$ is lower bounded by
\begin{equation}
R_u \geq \log_2\left(1+ \frac{\mathsf{SNR}}{U}{N_\mathrm{BS} N_\mathrm{MS} \left|\alpha_u\right|^2} G\left(\left\{\phi_u\right\}_{u=1}^U\right) \right), \label{eq:lower_perfect}
\end{equation}%
where $G\left(\left\{\phi_u\right\}_{u=1}^U\right)= 4 \left(\frac{\sigma_{\mathrm{max}}^2\left(\bA_\mathrm{BS}\right)}{\sigma_{\mathrm{min}}^2\left(\bA_\mathrm{BS}\right)}+\frac{\sigma_{\mathrm{min}}^2\left(\bA_\mathrm{BS}\right)}{\sigma_{\mathrm{max}}^2\left(\bA_\mathrm{BS}\right)}+2\right)^{-1}$, $\mathsf{SNR}=\frac{P}{\sigma^2}$. \label{Th1}
\end{theorem}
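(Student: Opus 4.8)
The plan is to follow the two stages of Algorithm~\ref{alg:MU_Precoding} explicitly under the single-path assumption, reduce the per-user SINR to a single scalar involving the Gram matrix of the BS steering vectors, and then invoke a Kantorovich-type inequality to produce the stated bound. First I would pin down the first-stage output. Writing the single-path channel as $\bH_u=\sqrt{N_\mathrm{BS}N_\mathrm{MS}}\,\alpha_u\,\ba_\mathrm{MS}(\theta_u)\ba_\mathrm{BS}^*(\phi_u)$, the first-stage objective factors as $\|\bg_u^*\bH_u\bv_u\|=\sqrt{N_\mathrm{BS}N_\mathrm{MS}}\,|\alpha_u|\,|\bg_u^*\ba_\mathrm{MS}(\theta_u)|\,|\ba_\mathrm{BS}^*(\phi_u)\bv_u|$. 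By Cauchy--Schwarz and the unit norm of the continuous-angle beamsteering vectors, this is maximized by $\bw_u=\ba_\mathrm{MS}(\theta_u)$ and $\bff_u^\mathrm{RF}=\ba_\mathrm{BS}(\phi_u)$, so that $\bF_\mathrm{RF}=\bA_\mathrm{BS}$.

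Next I would compute the effective channels. Since $\ba_\mathrm{MS}^*(\theta_u)\ba_\mathrm{MS}(\theta_u)=1$, the $u$th effective channel is $\overline{\bh}_u=\sqrt{N_\mathrm{BS}N_\mathrm{MS}}\,\alpha_u\,\ba_\mathrm{BS}^*(\phi_u)\bA_\mathrm{BS}$, i.e. $\alpha_u\sqrt{N_\mathrm{BS}N_\mathrm{MS}}$ times the $u$th row of the Hermitian, unit-diagonal Gram matrix $\bB\defeq\bA_\mathrm{BS}^*\bA_\mathrm{BS}$. Stacking gives $\overline{\bH}=\sqrt{N_\mathrm{BS}N_\mathrm{MS}}\,\bD_\alpha\bB$ with $\bD_\alpha=\mathrm{diag}(\alpha_1,\dots,\alpha_U)$. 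Assuming the $U$ AoDs are distinct and $N_\mathrm{BS}\ge U$ (so that $\bB$ is invertible), substituting into $\bF_\mathrm{BB}=\overline{\bH}^*(\overline{\bH}\,\overline{\bH}^*)^{-1}$ collapses, by cancellation of $\bD_\alpha$ and $\bB$ factors, to $\tfrac{1}{\sqrt{N_\mathrm{BS}N_\mathrm{MS}}}\bB^{-1}\bD_\alpha^{-1}$ before the final normalization. Because zero-forcing enforces $\overline{\bH}\bF_\mathrm{BB}=\bI$, every inter-user term $\overline{\bh}_u\bff_n^\mathrm{BB}$ with $n\neq u$ vanishes, and this nulling is preserved by the per-column normalization; hence the SINR in \eqref{eq:Rate} reduces to a pure SNR.

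I would then evaluate the desired-signal power after the normalization $\bff_u^\mathrm{BB}\leftarrow\bff_u^\mathrm{BB}/\|\bF_\mathrm{RF}\bff_u^\mathrm{BB}\|_F$. Denoting the unnormalized column by $\widetilde{\bff}_u^{\,\mathrm{BB}}$, normalization gives $|\overline{\bh}_u\bff_u^\mathrm{BB}|^2=1/\|\bF_\mathrm{RF}\widetilde{\bff}_u^{\,\mathrm{BB}}\|_F^2$, and using $\bF_\mathrm{RF}=\bA_\mathrm{BS}$ together with $\bA_\mathrm{BS}^*\bA_\mathrm{BS}=\bB$ the identity $\bB^{-1}\bB\bB^{-1}=\bB^{-1}$ yields $\|\bF_\mathrm{RF}\widetilde{\bff}_u^{\,\mathrm{BB}}\|_F^2=[\bB^{-1}]_{u,u}/(N_\mathrm{BS}N_\mathrm{MS}|\alpha_u|^2)$. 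Therefore the rate is exactly $R_u=\log_2\!\big(1+\tfrac{\mathsf{SNR}}{U}N_\mathrm{BS}N_\mathrm{MS}|\alpha_u|^2/[\bB^{-1}]_{u,u}\big)$, and it remains only to bound $[\bB^{-1}]_{u,u}$.

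The final, genuinely nontrivial step is upper-bounding this diagonal entry. Here I would exploit that $\bB$ has unit diagonal, $[\bB]_{u,u}=1$, and apply the Kantorovich inequality to the unit coordinate vector $\bee_u$: $([\bB]_{u,u})([\bB^{-1}]_{u,u})\le(\lambda_\mathrm{max}+\lambda_\mathrm{min})^2/(4\lambda_\mathrm{max}\lambda_\mathrm{min})$, with $\lambda_\mathrm{max}=\sigma_\mathrm{max}^2(\bA_\mathrm{BS})$ and $\lambda_\mathrm{min}=\sigma_\mathrm{min}^2(\bA_\mathrm{BS})$. Expanding the right-hand side gives exactly $\tfrac14\big(\sigma_\mathrm{max}^2/\sigma_\mathrm{min}^2+\sigma_\mathrm{min}^2/\sigma_\mathrm{max}^2+2\big)=1/G(\{\phi_u\}_{u=1}^U)$, so $1/[\bB^{-1}]_{u,u}\ge G(\{\phi_u\}_{u=1}^U)$; substituting into the exact rate delivers \eqref{eq:lower_perfect}. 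I expect the main obstacle to be recognizing and justifying this Kantorovich step and carefully bookkeeping the per-column normalization factor; everything else is straightforward linear-algebraic substitution.
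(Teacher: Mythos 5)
Your proposal is correct and takes essentially the same route as the paper: the first stage yields $\bF_\mathrm{RF}=\bA_\mathrm{BS}$, zero-forcing with the hybrid power normalization gives the exact rate $R_u=\log_2\bigl(1+\tfrac{\mathsf{SNR}}{U}N_\mathrm{BS}N_\mathrm{MS}\left|\alpha_u\right|^2/[(\bA_\mathrm{BS}^*\bA_\mathrm{BS})^{-1}]_{u,u}\bigr)$, and the diagonal entry of the inverse Gram matrix is bounded via Kantorovich. Your direct application of the Kantorovich inequality to the coordinate vector $\bee_u$ (using the unit diagonal of $\bA_\mathrm{BS}^*\bA_\mathrm{BS}$) is exactly the content of the Bai--Golub lemma the paper invokes, so the argument is the same in substance.
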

\begin{proof}
Consider the BS and MS's with the system and channels described in \sref{sec:Model}. Then, in the first stage of Algorithm \ref{alg:MU_Precoding}, the BS and each MS $u$ find $\bv_u^\star$ and $\bg_u^\star$ that solve
\begin{equation}
\left\{\bg_u^\star, \bv_u^\star\right\}=\displaystyle{\operatorname*{\arg max}_{ \substack{\forall \bg_u \in \cW \\ \forall \bv_u \in \cF }}}{\|\bg_u^* \bH_u \bv_u\|}.
\label{eq:Stage1}
\end{equation}

As the channel $\bH_u$ has only one path, and given the continuous beamsteering capability assumption, the optimal RF precoding and combining vectors will be $\bg_u^\star=\ba_\mathrm{MS}(\theta_u)$, and $\bv_u^\star=\ba_\mathrm{BS}(\phi_u)$. Consequently, the MS sets $\bw_u=\ba_\mathrm{MS}(\theta_u)$ and the BS takes $\bff_u^\mathrm{RF}=\ba_\mathrm{BS}(\phi_u)$. Gathering the beamforming vectors for the $U$ users, the BS RF beamforming matrix is then $\bF_\mathrm{RF}=\bA_\mathrm{BS}=\left[\ba_\mathrm{BS}\left(\phi_1\right), \ba_\mathrm{BS}\left(\phi_2\right), ..., \ba_\mathrm{BS}\left(\phi_U\right) \right]$.

The effective channel for user $u$ after designing the RF precoders and combiners is
\begin{equation}
\begin{split}
\overline{\bh}_u&=\bw_u \bH_u \bF_\mathrm{RF}\\
&=\sqrt{N_\mathrm{BS} N_\mathrm{MS}} \alpha_u \ba^*_\mathrm{BS}\left(\phi_u\right) \bF_\mathrm{RF}.
\end{split}
\end{equation}

Now, defining $\overline{\bH}=[\overline{\bh}_1^\mathrm{T}, \overline{\bh}_2^\mathrm{T}, ..., \overline{\bh}_U^\mathrm{T}]^\mathrm{T}$, and given the design of $\bF_\mathrm{RF}$, we can write the effective channel matrix $\overline{\bH}$ as
\begin{equation}
\overline{\bH}=\bD \bA_\mathrm{BS}^* \bA_\mathrm{BS}, \label{eq:eff_channel}
\end{equation}
where $\bD$ is a $U \times U$ diagonal matrix with $\left[\bD\right]_{u,u}=\sqrt{N_\mathrm{BS} N_\mathrm{MS}} \alpha_u$.

Based on this effective channel, the BS zero-forcing digital precoder is defined as
\begin{equation}
\bF_\mathrm{BB}=\overline{\bH}^* \left(\overline{\bH} \overline{\bH}^*\right)^{-1} \boldsymbol\Lambda,
\end{equation}
where $\boldsymbol\Lambda$ is a diagonal matrix with the diagonal elements adjusted to satisfy the precoding power constraints $\left\|\bF_\mathrm{RF} \bff_u^\mathrm{BB}\right\|^2=1, u=1, 2, ..., U$. The diagonal elements of $\boldsymbol\Lambda$ are then equal to
\begin{equation}
{\boldsymbol\Lambda}_{u,u}=\sqrt{\frac{N_\mathrm{BS} N_\mathrm{MS} }{ \left(\bA_\mathrm{BS}^* \bA_\mathrm{BS}\right)^{-1}_{u,u}}}\left|\alpha_u\right|, u=1, 2, ..., U.
\end{equation}

Note that this $\boldsymbol\Lambda$ is different than the traditional digital zero-forcing precoder due to the different power constraints in the hybrid analog/digital architecture. [See Appendix \ref{app:power} for a derivation]

The achievable rate for user $u$ is then
\begin{align}
\begin{split}
R_u&=\log_2 \left(1+\frac{\mathsf{SNR}}{U} \left|\overline{\bh}_u^* \bff_u^\mathrm{BB}\right|^2\right),\\
&=\log_2\left(1+\frac{\mathsf{SNR}}{ U} \frac{N_\mathrm{BS} N_\mathrm{MS} \left|\alpha_u\right|^2 }{\left(\bA_\mathrm{BS}^* \bA_\mathrm{BS}\right)^{-1}_{u,u}}\right).
\end{split}
\end{align}

To bound this rate, the following lemma which characterizes a useful property of the matrix $\bA_\mathrm{BS}^* \bA_\mathrm{BS}$ can be used.
\begin{lemma}
Assume $\bA_\mathrm{BS}=\left[\ba_\mathrm{BS}\left(\phi_1\right), \ba_\mathrm{BS}\left(\phi_2\right), ..., \ba_\mathrm{BS}\left(\phi_U\right) \right]$, with the angles $\phi_u, u=1, 2, ..., U$ taking continuous values in $[0, 2 \pi]$, then the matrix $\bP=\bA_\mathrm{BS}^* \bA_\mathrm{BS}$ is positive definite almost surely.
\end{lemma}
\begin{proof}
Let the matrix $\bP = \bA_\mathrm{BS}^* \bA_\mathrm{BS}$, then for any non-zero complex vector $\bz \in \cC^{U}$, it follows that $\bz^* \bP \bz = \|\bA_\mathrm{BS} \bz\|_2^2 \geq 0$. Hence, the matrix $\bP$ is positive semi-definite. Further, if the vectors $\ba_\mathrm{BS}\left(\phi_1\right), \ba_\mathrm{BS}\left(\phi_2\right)$, $ ..., \ba_\mathrm{BS}\left(\phi_U\right)$ are linearly independent, then for any non-zero complex vector $\bz$, $\bA_\mathrm{BS} \bz \neq 0$, and the matrix $\bP$ is positive definite. To show that, consider any two vectors $\ba_\mathrm{BS}\left(\phi_u\right), \ba_\mathrm{BS}\left(\phi_n\right)$. These vectors are linearly dependent if and only if $\phi_u=\phi_n$. As the probability of this event equals zero when the AoDs $\phi_u$ and $\phi_n$ are selected independently from a continuous distribution, then the matrix $\bP$ is positive definite with probability one.
\end{proof}

Now, using the Kantorovich inequality \cite{kantorovich1948functional}, we can bound the diagonal entries of the matrix $\left(\bA_\mathrm{BS}^* \bA_\mathrm{BS}\right)^{-1}$ using the following lemma from \cite{bai1996bounds}.
\begin{lemma}
For any $n \times n$ Hermitian and positive definite matrix $\bP$ with the ordered eigenvalues satisfying $0<\lambda_\mathrm{min} \leq \lambda_2 \leq ... \leq \lambda_\mathrm{max}$, the element $\left(\bP \right)^{-1}_{u,u}, u=1, 2, ..., n$ satisfies
\begin{equation}
\left(\bP \right)^{-1}_{u,u} \leq \frac{1}{4 [\bP]_{u,u}}\left(\frac{\lambda_\mathrm{max}\left(\bP\right)}{\lambda_\mathrm{min}\left(\bP\right)}+\frac{\lambda_\mathrm{min}\left(\bP\right)}{\lambda_\mathrm{max}\left(\bP\right)}+2\right).
\end{equation}
\label{lemma3}
\end{lemma}

Finally, noting that $\left(\bA_\mathrm{BS}^* \bA_\mathrm{BS}\right)_{u,u}=1$, $\lambda_\mathrm{min}\left(\bA_\mathrm{BS}^* \bA_\mathrm{BS}\right)=\sigma_\mathrm{min}^2\left(\bA_\mathrm{BS}\right)$, and $\lambda_\mathrm{max}\left(\bA_\mathrm{BS}^* \bA_\mathrm{BS}\right)=\sigma_\mathrm{max}^2\left(\bA_\mathrm{BS}\right)$ and using lemma \ref{lemma3}, we get the lower bound on the achievable rate in \eqref{eq:lower_perfect}.
\end{proof}

In addition to characterizing a lower bound on the rates achieved by the proposed hybrid analog/digital precoding algorithm, the bound in \eqref{eq:lower_perfect}  separates the dependence on the channel gains $\alpha_u$, and the AoDs $\phi_u, u=1, 2, ..., U$ which can be used to claim the optimality of the proposed algorithm in some cases and to give useful insights into the  gain of the proposed algorithm over analog-only beamsteering solutions. This is illustrated in the following results.

\begin{proposition}
Denote the single-user rate as  $\mathring{R}_u=\log_2\left(1+\frac{\mathsf{SNR}}{ U}N_\mathrm{BS} N_\mathrm{MS} \left|\alpha_u\right|^2\right)$.  When Algorithm \ref{alg:MU_Precoding} is used to design the hybrid precoders and RF combiners described in \sref{sec:Model}, and given the assumptions stated in Theorem \ref{Th1}, the relation between the  achievable rate by any user $u$, and the single-user rate, $\mathring{R}_u$ satisfies
\begin{enumerate}
\item{$\bbE\left[ \mathring{R}_u- R_u\right]\leq K\left(N_\mathrm{BS},U\right)$}.
\item{$\lim_{N_\mathrm{BS}\rightarrow\infty}{R_u}=\mathring{R}_u$ almost surely}.
\end{enumerate}
where $K\left(N_\mathrm{BS},U\right)$ is a constant whose value depends only on $N_\mathrm{BS}$ and $U$.
\label{Prop1}
\end{proposition}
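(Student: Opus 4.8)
The plan is to reason directly from the exact per-user rate obtained inside the proof of Theorem~\ref{Th1}, namely $R_u=\log_2\!\big(1+\tfrac{\mathsf{SNR}}{U}N_\mathrm{BS}N_\mathrm{MS}|\alpha_u|^2/q_u\big)$ with $q_u:=\big(\bA_\mathrm{BS}^*\bA_\mathrm{BS}\big)^{-1}_{u,u}$, and to compare it against $\mathring{R}_u$, which is the same expression with $q_u$ replaced by $1$. Writing $c_u:=\tfrac{\mathsf{SNR}}{U}N_\mathrm{BS}N_\mathrm{MS}|\alpha_u|^2\ge0$, the gap is $\mathring{R}_u-R_u=\log_2\frac{q_u(1+c_u)}{q_u+c_u}$. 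Since $\big[\bA_\mathrm{BS}^*\bA_\mathrm{BS}\big]_{u,u}=1$ and $\bA_\mathrm{BS}^*\bA_\mathrm{BS}$ is positive definite almost surely (the lemma stated immediately after the rate expression), the diagonal of the inverse satisfies $q_u\ge1$; this gives $q_u+c_u\ge1+c_u$ and hence the clean, SNR- and $\alpha_u$-independent bound $0\le\mathring{R}_u-R_u\le\log_2 q_u$. Everything reduces to controlling $q_u$.

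For part (1) I would take expectations of this bound. After bounding by $\log_2 q_u$ the channel gains drop out, and the only remaining randomness is in the AoDs $\{\phi_u\}$ drawn i.i.d.\ from a fixed continuous law; therefore $\bbE[\log_2 q_u]$ is a deterministic function of $N_\mathrm{BS}$ and $U$ alone, and I would simply set $K(N_\mathrm{BS},U):=\bbE[\log_2 q_u]$. The only substantive point is finiteness. I would bound the conditioning via Lemma~\ref{lemma3}, which yields $q_u\le\tfrac14(\kappa+\kappa^{-1}+2)\le\kappa$ with $\kappa:=\lambda_\mathrm{max}/\lambda_\mathrm{min}$; since $\operatorname{tr}(\bA_\mathrm{BS}^*\bA_\mathrm{BS})=U$ forces $\lambda_\mathrm{max}\le U$ and hence $\lambda_\mathrm{min}\ge\det(\bA_\mathrm{BS}^*\bA_\mathrm{BS})/U^{U-1}$, we get $\log_2 q_u\le U\log_2 U-\log_2\det(\bA_\mathrm{BS}^*\bA_\mathrm{BS})$. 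For a ULA, $\bA_\mathrm{BS}$ is a normalized rectangular Vandermonde matrix, so by Cauchy--Binet $\det(\bA_\mathrm{BS}^*\bA_\mathrm{BS})\ge N_\mathrm{BS}^{-U}\prod_{i<j}|z_i-z_j|^2$ with $z_i=e^{\j\frac{2\pi}{\lambda}d\sin\phi_i}$, reducing finiteness to $\bbE[-\log|z_i-z_j|]<\infty$; this holds because $-\log|z_i-z_j|$ has only an integrable logarithmic singularity where two AoDs collide. Summing over the $\binom{U}{2}$ pairs gives a finite $K(N_\mathrm{BS},U)$.

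For part (2) I would show $q_u\to1$ almost surely as $N_\mathrm{BS}\to\infty$; because the bound gives $0\le\mathring{R}_u-R_u\le\log_2 q_u$ uniformly in $c_u$, this forces $R_u\to\mathring{R}_u$ no matter how fast $c_u$ grows with $N_\mathrm{BS}$. The convergence $q_u\to1$ follows from $\bA_\mathrm{BS}^*\bA_\mathrm{BS}\to\bI_U$: the diagonal entries are identically $1$, while each off-diagonal $\ba_\mathrm{BS}^*(\phi_i)\ba_\mathrm{BS}(\phi_j)$ is a normalized Dirichlet kernel of magnitude $\le\frac{1}{N_\mathrm{BS}|\sin((\psi_i-\psi_j)/2)|}$ with $\psi=\frac{2\pi}{\lambda}d\sin\phi$, which vanishes whenever $\phi_i\ne\phi_j$ (up to spatial aliasing), an event of probability one under a continuous AoD law. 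Continuity of $\bP\mapsto(\bP^{-1})_{u,u}$ on positive definite matrices then gives $q_u\to(\bI_U^{-1})_{u,u}=1$ almost surely.

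The main obstacle is the finiteness in part (1): $q_u$ is unbounded, blowing up precisely when two AoDs nearly coincide and $\bA_\mathrm{BS}$ becomes ill-conditioned, so the claim is really an integrability statement rather than a uniform bound, and it is the Vandermonde/Cauchy--Binet lower bound on $\det(\bA_\mathrm{BS}^*\bA_\mathrm{BS})$ together with integrability of logarithmic singularities that tames it; for general non-ULA geometries one would instead need an analogous non-degeneracy of the array manifold. Part (2) is comparatively routine once asymptotic orthogonality of the steering vectors is invoked.
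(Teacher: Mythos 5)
Your proof is correct and follows the same skeleton as the paper's (Appendix~\ref{app:Cor1}): bound the gap $\mathring{R}_u-R_u$ by a quantity depending only on the conditioning of $\bA_\mathrm{BS}^*\bA_\mathrm{BS}$, hence only on the AoDs, and prove part (2) via asymptotic orthogonality of the steering vectors. The differences are worth recording. For part (1) the paper simply invokes the Theorem~\ref{Th1} lower bound to get $\bbE[\mathring{R}_u-R_u]\leq\bbE\left[-\log_2 G\left(\{\phi_u\}_{u=1}^U\right)\right]$ and \emph{defines} $K(N_\mathrm{BS},U)$ to be that expectation; it never verifies the expectation is finite, even though $G\to 0$ as two AoDs coalesce. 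You work instead with the exact rate through $q_u=\left(\bA_\mathrm{BS}^*\bA_\mathrm{BS}\right)^{-1}_{u,u}$ and the elementary fact $q_u\geq 1$, which yields the bound $0\leq\mathring{R}_u-R_u\leq\log_2 q_u$ — tighter than the paper's, since Lemma~\ref{lemma3} gives $q_u\leq 1/G$ — and, crucially, manifestly uniform in $\mathsf{SNR}$ and $|\alpha_u|^2$. Your Cauchy--Binet/Vandermonde determinant bound together with the integrability of the logarithmic singularity (for uniform AoDs on a half-wavelength ULA the induced node measure is the arcsine law, whose logarithmic energy is classically finite) supplies exactly the finiteness of $K$ that the paper's one-line argument leaves unaddressed; the price is that this step is ULA-specific, whereas the paper's (unproven) claim is formally geometry-agnostic, as you note yourself. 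For part (2) the paper cites lemma~1 of \cite{ElAyache2} for $\beta_{u,n}\to 0$ almost surely; you prove the same decay directly via the Dirichlet-kernel bound and, more importantly, use the uniformity of $\log_2 q_u$ in $c_u=\frac{\mathsf{SNR}}{U}N_\mathrm{BS}N_\mathrm{MS}|\alpha_u|^2$ to justify the limit even though $c_u$ grows linearly in $N_\mathrm{BS}$ — a point the paper's ``$G=1$ hence $R_u=\mathring{R}_u$'' conclusion glosses over. In short: same route, but your version closes two small rigor gaps in the published proof.
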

\begin{proof}
 See Appendix \ref{app:Cor1}.
\end{proof}

Proposition \ref{Prop1} indicates that the average achievable rate of any user $u$ using the proposed low-complexity precoding/combining algorithm grows with the same slope of the single-user rate at high SNR, and stays within a constant gap from it. This gap, $K\left(N_\mathrm{BS}, U\right)$, depends only on the number of users and the number of BS antennas. As the number of BS antennas increases, the matrix $\bA_\mathrm{BS}$ becomes more well-conditioned, and the ratio between its maximum and minimum singular values will approach one. Hence, the value of $G\left(\left\{\phi_u\right\}_{u=1}^U\right)$ in \eqref{eq:lower_perfect} will be closer to one, and the gap between the achievable rate using Algorithm \ref{alg:MU_Precoding} and the single-user rate will decrease. This will also be shown by numerical simulations in \sref{sec:Results}. One important note here is that this gap does not depend on the number of MS antennas, which is contrary to the analog-only beamsteering, given by the first stage only of Algorithm \ref{alg:MU_Precoding}. This leads to the following corollary.

\begin{corollary}
Let $R_\mathrm{BS}$ denote the rate achieved by user $u$ when the BS employs analog-only beamsteering designed according to Step $1$ of Algorithm \ref{alg:MU_Precoding}. Then, the relation between the average achievable rate using Algorithm \ref{alg:MU_Precoding} $R_u$ and the average rate of analog-only beamsteering solution when the number of MS antennas goes to infinity satisfies: $\lim_{N_\mathrm{MS}\rightarrow \infty}\bbE\left[R_u-R_\mathrm{BS}|\right]= \infty$.
\label{Prop2}
\end{corollary}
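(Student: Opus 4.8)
The plan is to exhibit a scaling dichotomy between the two rates as $N_\mathrm{MS}\to\infty$: the hybrid rate $R_u$ grows without bound because zero-forcing makes it noise-limited, whereas the analog-only rate $R_\mathrm{BS}$ saturates because it is interference-limited. First I would write $R_\mathrm{BS}$ explicitly. Under the single-path assumptions of Theorem~\ref{Th1}, Step~$1$ of Algorithm~\ref{alg:MU_Precoding} sets $\bw_u=\ba_\mathrm{MS}(\theta_u)$ and uses $\ba_\mathrm{BS}(\phi_u)$ as the per-user analog precoder, with \emph{no} digital layer to cancel leakage. A direct computation (exactly as in the proof of Theorem~\ref{Th1}, but retaining the cross terms) gives desired-signal power $\tfrac{\mathsf{SNR}}{U}N_\mathrm{BS}N_\mathrm{MS}|\alpha_u|^2$, per-user power $P/U$ and unit normalized noise, so that
\[
R_\mathrm{BS}=\log_2\!\left(1+\frac{\frac{\mathsf{SNR}}{U}N_\mathrm{BS}N_\mathrm{MS}|\alpha_u|^2}{\frac{\mathsf{SNR}}{U}N_\mathrm{BS}N_\mathrm{MS}|\alpha_u|^2\sum_{n\neq u}|\ba^*_\mathrm{BS}(\phi_u)\ba_\mathrm{BS}(\phi_n)|^2+1}\right).
\]

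The crucial observation is that the leakage term carries exactly the same $N_\mathrm{MS}$ dependence as the signal term, so discarding the noise bounds the SINR by a quantity independent of $N_\mathrm{MS}$:
\[
R_\mathrm{BS}\le Z:=\log_2\!\left(1+\frac{1}{\sum_{n\neq u}|\ba^*_\mathrm{BS}(\phi_u)\ba_\mathrm{BS}(\phi_n)|^2}\right),
\]
a random variable that depends only on the AoDs and on $N_\mathrm{BS}$. By the lower bound \eqref{eq:lower_perfect} of Theorem~\ref{Th1}, the hybrid rate satisfies $R_u\ge Y_{N_\mathrm{MS}}:=\log_2\!\big(1+\tfrac{\mathsf{SNR}}{U}N_\mathrm{BS}N_\mathrm{MS}|\alpha_u|^2\,G(\{\phi_u\})\big)$, where $G(\{\phi_u\})>0$ almost surely, since the positive-definiteness argument in the proof of Theorem~\ref{Th1} guarantees $\sigma_\mathrm{min}(\bA_\mathrm{BS})>0$ whenever the AoDs are distinct, which holds with probability one.

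Combining the two bounds gives $R_u-R_\mathrm{BS}\ge Y_{N_\mathrm{MS}}-Z$. Since $|\alpha_u|^2>0$ and $G>0$ almost surely, $Y_{N_\mathrm{MS}}$ is nonnegative, increasing in $N_\mathrm{MS}$, and diverges to $+\infty$ almost surely, while $Z$ does not depend on $N_\mathrm{MS}$. Taking expectations, $\bbE[R_u-R_\mathrm{BS}]\ge\bbE[Y_{N_\mathrm{MS}}]-\bbE[Z]$, and the monotone convergence theorem yields $\bbE[Y_{N_\mathrm{MS}}]\to\infty$. The claim then reduces to verifying the single fact that $\bbE[Z]<\infty$, for then the diverging term $\bbE[Y_{N_\mathrm{MS}}]$ dominates the fixed finite term $\bbE[Z]$ and $\lim_{N_\mathrm{MS}\to\infty}\bbE[R_u-R_\mathrm{BS}]=\infty$.

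The main obstacle is precisely this integrability of $Z$: the leakage sum $\sum_{n\neq u}|\ba^*_\mathrm{BS}(\phi_u)\ba_\mathrm{BS}(\phi_n)|^2$ can be arbitrarily small, so $Z$ develops singularities wherever a beamsteering pair becomes orthogonal. I would handle this by lower-bounding the sum by any single term, $Z\le\log_2\!\big(1+1/|\ba^*_\mathrm{BS}(\phi_u)\ba_\mathrm{BS}(\phi_{n_0})|^2\big)$ for a fixed $n_0\neq u$, and analyzing that one pair. For a ULA the squared correlation is a Dirichlet kernel in the angle difference whose zeros are simple (locally quadratic), so $\log(1/|\cdot|^2)$ has only integrable logarithmic singularities against the bounded angular density, giving $\bbE[Z]<\infty$. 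I expect the bookkeeping near the orthogonality set to be the only delicate point; the signal/interference scaling dichotomy that drives the divergence is immediate from the closed forms above.
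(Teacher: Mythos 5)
Your proposal is correct and follows essentially the same route as the paper's Appendix C: both lower-bound $R_u$ via Theorem \ref{Th1} and upper-bound the interference-limited beamsteering rate through the cross-correlations $\beta_{u,n}=\ba^*_\mathrm{BS}(\phi_u)\ba_\mathrm{BS}(\phi_n)$, so that the desired signal scales with $N_\mathrm{MS}$ while the analog-only SINR saturates (the paper extracts the $\log_2(N_\mathrm{MS})$ term additively where you instead drop the noise to get the $N_\mathrm{MS}$-independent bound $Z$, a cosmetic difference). In fact you are slightly more careful than the paper, which simply asserts that ``the other terms become constant'' after taking expectations; your explicit verification that $\bbE[Z]<\infty$ despite the logarithmic singularities at the zeros of the beam pattern closes a step the paper leaves implicit.
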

\begin{proof}See Appendix \ref{app:Prop2}.
\end{proof}

This corollary implies that multi-user interference management is still important at mmWave systems even when very large numbers of antennas are used at the BS and MS's. Note also that this is not the case when the number of BS antennas goes to infinity as it can be easily shown that the performance of RF beamsteering alone becomes optimal in this case.

\subsection{Large-dimensional Regime}\label{subsec:Perfect_large}
Under the assumption of large numbers of transmit antennas, a different approximation of the achievable rate can be derived. We approach this problem using the virtual channel model framework and its simplifications in large MIMO systems \cite{Virtual, Hong}. The results of this section are, therefore, valid only for uniform arrays, e.g., ULAs and UPAs \cite{Virtual,CAP}. The virtual channel model characterizes physical channels via joint spatial beams in \textit{fixed} virtual transmit and receive directions exploiting the finite dimensionality of the MIMO system, i.e., the finite number of transmit and receive antennas. The virtual transmit and receive directions are fixed because they depend only on the number of BS and MS antennas. Hence, they are common for the different users with the same number of antennas. Using this channel model, the $u$th user channel $\bH_u$ can be written as \cite{Virtual}
\begin{equation}
\bH_u=\overline{\bA}_\mathrm{MS} \bH_u^\mathrm{v} \overline{\bA}_\mathrm{BS}^*,
\label{eq:virtual}
\end{equation}
where $\overline{\bA}_\mathrm{BS}=[\ba_\mathrm{BS}\left(\bar{\phi}_{1}\right), \ba_\mathrm{BS}\left(\bar{\phi}_{2}\right), ..., \ba_\mathrm{BS}\left(\bar{\phi}_{N_\mathrm{BS}}\right)]$ is an $N_\mathrm{BS}\times N_\mathrm{BS}$ matrix carrying the BS array response vectors in the virtual directions $\bar{\phi}_{p}, p=1, 2, ..., N_\mathrm{BS}$ that satisfy $ \frac{2 \pi d}{\lambda}\sin\left(\bar{\phi}_{p}\right)=\frac{2 \pi p}{N_\mathrm{BS}}$. Similarly, $\overline{\bA}_\mathrm{MS}=[\ba_\mathrm{MS}\left(\bar{\theta}_{1}\right), \ba_\mathrm{MS}\left(\bar{\theta}_{2}\right), ..., \ba_\mathrm{MS}\left(\bar{\theta}_{N_\mathrm{MS}}\right)]$ carries the MS array response vectors in the virtual directions $\bar{\theta}_{q}, q=1, 2, ..., N_\mathrm{MS}$ that satisfy $\frac{2 \pi d}{\lambda}\sin\left(\bar{\theta}_{q}\right)=\frac{2 \pi q}{N_\mathrm{MS}}$. Thanks to these special virtual channel angles, the matrices $\overline{\bA}_\mathrm{BS}$ and $\overline{\bA}_\mathrm{MS}$ are DFT matrices \cite{Virtual}. Finally, $\bH_u^\mathrm{v}$ is the $u$th MS virtual channel matrix with each element $\left[\bH_u^\mathrm{v}\right]_{q,p}$ representing a group of physical spatial paths, and approximately equal to the sum of the gains of those paths \cite{Virtual}.

One advantage of using the virtual channel model in analyzing our proposed multi-user precoding algorithm lies in the fact that it provides a common space of the transmit eigenvectors of the different users. This means that the BS eigenvectors for each MS form a subset of the columns of the DFT matrix $\overline{\bA}_\mathrm{BS}$. The virtual channel model also provides a simple way to incorporate the angle spread associated with mmWave channel scatterers by defining each element of the virtual channel matrix as the sum of the channel gains associated with the scatterers located in a certain direction multiplied by the integration of the spatial spreading functions of these scatterers \cite{Virtual}.

Before leveraging this channel model in analyzing the proposed hybrid precoding algorithm, we rewrite the channel in \eqref{eq:virtual} as
\begin{equation}
\bH_u=\sqrt{\frac{N_\mathrm{BS} N_\mathrm{MS}} { L_{u}}} \sum_{m=1}^{N_\mathrm{BS} N_\mathrm{MS}}{\gamma_{u,m} \ba_\mathrm{MS}\left(\bar{\theta}_{u,m}\right) \ba_\mathrm{BS}^*\left(\bar{\phi}_{u,m}\right)},
\label{eq:virtual2}
\end{equation}
where $\left|\gamma_{u,1}\right| \geq \left|\gamma_{u,2}\right| \geq ... \geq \left|\gamma_{u,N_\mathrm{BS} N_\mathrm{MS}}\right|$. $\gamma_{u,m}$ equals the element in $\bH_u^\mathrm{v}$ with the $m$th largest virtual channel element magnitude, and $\bar{\phi}_{u,m}, \bar{\theta}_{u,m}$ are the corresponding transmit and receive virtual directions, respectively.

In the following proposition, we use this channel model to characterize a simple lower bound on the achievable rate of Algorithm \ref{alg:MU_Precoding} for arbitrary numbers of channel paths assuming for simplicity that $L_u=L, u=1, 2, ..., U$. The derived results give useful analytical insights into the asymptotic performance of the proposed algorithm in the multi-path case.

\begin{proposition}
\label{prop:Prop6}
Define the single-user rate as $\mathring{R}_u=\log2(1+ \frac{\mathsf{SNR}}{ U L} N_\mathrm{BS} N_\mathrm{NS} \left|\gamma_{u,1}\right|^2)$. Then, when Algorithm \ref{alg:MU_Precoding} is used to design the hybrid analog/digital precoders at the BS and RF combiners at the MSs, with the assumptions in Theorem \ref{Th1}, and adopting the virtual channel model in \eqref{eq:virtual2}, the average achievable rate of user $u$ is lower bounded by
\begin{equation}
\bbE\left[R_u\right]\geq \bbE\left[ \mathring{R}_u \right] \left(\prod_{i=1}^{U-1}\left(1-\frac{i}{N_\mathrm{BS}}\right) \left(1-\frac{L-1}{N_\mathrm{MS}}\right)^{U}+ \mathbbm{1}_{\left(L>1\right)} \prod_{i=1}^{U-1}\left(1-\frac{i L}{N_\mathrm{BS}}\right) \left(\frac{1}{N_\mathrm{MS}}\right)^{\left(L-1\right)U}\right).
\label{eq:vir_bound}\end{equation}
\end{proposition}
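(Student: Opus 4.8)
The plan is to exploit the virtual channel model of \eqref{eq:virtual2} to show that the first stage of Algorithm \ref{alg:MU_Precoding} locks each user onto its dominant virtual bin, and then to isolate two disjoint ``collision-free'' angular events on which the effective channel matrix $\overline{\bH}$ becomes \emph{exactly} diagonal. On each such event zero-forcing creates no residual multiuser interference and simply reproduces the single-user rate $\mathring{R}_u$, while on all remaining realizations the rate is nonnegative and can be discarded. The claimed bound \eqref{eq:vir_bound} is then $\bbE[\mathring{R}_u]$ multiplied by a lower bound on the total probability of the two good events.

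First I would settle the first stage. Writing $\bH_u$ as in \eqref{eq:virtual2} and using that the virtual-direction response vectors are columns of the DFT matrices $\overline{\bA}_\mathrm{BS},\overline{\bA}_\mathrm{MS}$ and hence orthonormal, a beamsteering pair aligned to bin $m$ yields $\bg_u^*\bH_u\bv_u=\sqrt{N_\mathrm{BS}N_\mathrm{MS}/L}\,\gamma_{u,m}$, whereas off-grid angles only shrink this through the Dirichlet kernel. Thus the maximizer of \eqref{eq:Stage1} is the dominant bin, giving $\bw_u=\ba_\mathrm{MS}(\bar\theta_{u,1})$, $\bff_u^\mathrm{RF}=\ba_\mathrm{BS}(\bar\phi_{u,1})$, and $\bF_\mathrm{RF}=[\ba_\mathrm{BS}(\bar\phi_{1,1}),\dots,\ba_\mathrm{BS}(\bar\phi_{U,1})]$. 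Expanding $\overline{\bh}_u=\bw_u^*\bH_u\bF_\mathrm{RF}$ with the same orthogonality, the $(u,n)$ entry of $\overline{\bH}$ equals $\sqrt{N_\mathrm{BS}N_\mathrm{MS}/L}\sum_{m}\gamma_{u,m}\,\mathbbm{1}_{(\bar\theta_{u,m}=\bar\theta_{u,1})}\,\mathbbm{1}_{(\bar\phi_{u,m}=\bar\phi_{n,1})}$; in particular the diagonal entry is always $\sqrt{N_\mathrm{BS}N_\mathrm{MS}/L}\,\gamma_{u,1}$, since only bin $1$ matches both the transmit and receive directions of user $u$.

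Next I would define the good events. Let $E_1$ be the event that the $U$ dominant transmit directions $\{\bar\phi_{u,1}\}$ are distinct and, for every user, none of the $L-1$ weaker paths falls in the dominant receive bin $\bar\theta_{u,1}$; and, only when $L>1$, let $E_2$ be the event that for every user all $L$ paths share the receive bin $\bar\theta_{u,1}$ while no path's transmit bin coincides with another user's dominant transmit bin. On either event every off-diagonal entry above vanishes, so $\overline{\bH}$ is diagonal. Redoing the power-normalization argument of Theorem \ref{Th1} for a diagonal $\overline{\bH}$ gives $\boldsymbol\Lambda_{u,u}=|[\overline{\bH}]_{u,u}|$ and $|\overline{\bh}_u^*\bff_u^\mathrm{BB}|^2=N_\mathrm{BS}N_\mathrm{MS}|\gamma_{u,1}|^2/L$, i.e. $R_u=\mathring{R}_u$ exactly. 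Because the receive-bin requirements of $E_1$ and $E_2$ are mutually exclusive for $L>1$, the two events are disjoint, so pointwise $R_u\ge\mathring{R}_u(\mathbbm{1}_{E_1}+\mathbbm{1}_{E_2})$.

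Finally I would bound the two probabilities by a birthday/union-bound argument, modeling the virtual bins of the paths as independent uniform draws over the $N_\mathrm{BS}$ transmit and $N_\mathrm{MS}$ receive bins. The distinct-dominant-AoD condition has probability $\prod_{i=1}^{U-1}(1-i/N_\mathrm{BS})$; the per-user receive-miss condition has probability at least $1-(L-1)/N_\mathrm{MS}$ by a union bound, raised to the power $U$ by independence across users, which produces the first factor. For $E_2$, the all-paths-collide condition contributes $(1/N_\mathrm{MS})^{(L-1)U}$, and the sequential transmit non-collision (user $i{+}1$ avoiding the $iL$ already-occupied transmit bins) contributes at least $\prod_{i=1}^{U-1}(1-iL/N_\mathrm{BS})$, giving the second term. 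Since these angular events are independent of the ordered gain magnitudes defining $\mathring{R}_u$, taking expectations factors as $\bbE[\mathring{R}_u\mathbbm{1}_{E_j}]=\bbE[\mathring{R}_u]\Pr[E_j]$, and summing the two lower bounds yields \eqref{eq:vir_bound}. I expect the main obstacle to be twofold: rigorously certifying that on the good events zero-forcing is \emph{exactly} interference-free (so the rate is literally $\mathring{R}_u$, not merely asymptotically close), and correctly identifying the $E_2$ collision event whose probability matches the stated product $\prod_{i=1}^{U-1}(1-iL/N_\mathrm{BS})\,(1/N_\mathrm{MS})^{(L-1)U}$; once the right disjoint events are pinned down, the remaining combinatorics and the independence factorization are routine.
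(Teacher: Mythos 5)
Your proposal follows essentially the same route as the paper's proof: locking each user onto its dominant virtual bin in stage one, observing that on the event $\bP_\mathrm{v}=\bI$ (your diagonal-$\overline{\bH}$ events $E_1$ and $E_2$) zero-forcing exactly recovers the single-user rate $\mathring{R}_u$, factorizing $\bbE[\mathring{R}_u\mathbbm{1}_{E_j}]=\bbE[\mathring{R}_u]\Pr(E_j)$ via gain/angle independence, and lower-bounding $\Pr(\bP_\mathrm{v}=\bI)$ by exactly the paper's two disjoint conditioning events (all weaker-path AoAs missing the dominant receive bin, versus all AoAs coinciding), yielding the two terms of \eqref{eq:vir_bound}. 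The obstacle you flag about the $E_2$ transmit-collision product $\prod_{i=1}^{U-1}(1-iL/N_\mathrm{BS})$ is fair, but the paper itself states this probability without further derivation and acknowledges the bound is approximate, so your treatment matches the paper's own level of rigor.
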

\begin{proof}
Consider the BS and MS's with the system model described in \sref{sec:Model}, and the approximated channel model in \eqref{eq:virtual2}. In the first stage of Algorithm \ref{alg:MU_Precoding}, the BS and each MS $u$ find $\bv_u^\star$  and $\bg_u^\star$  that solve \eqref{eq:Stage1}. Given the virtual channel model in \eqref{eq:virtual2}, we get $\bw_u=\bg_u^\star=\ba_\mathrm{MS}\left(\bar{\theta}_{u,1}\right)$ and $\bv_u^\star=\ba_\mathrm{BS}\left(\bar{\phi}_{u,1}\right)$. Consequently, the RF precoder at the BS becomes $\bF_\mathrm{RF}=\left[\ba_\mathrm{BS}\left(\bar{\phi}_{1,1}\right), \ba_\mathrm{BS}\left(\bar{\phi}_{2,1}\right), ..., \ba_\mathrm{BS}\left(\bar{\phi}_{U,1}\right)\right]$. Now, we can write the $u$th MS effective channel as
\begin{equation}
\overline{\bh}_u=\bw_u^* \bH_u \bF_\mathrm{RF}=\sqrt{\frac{N_\mathrm{BS} N_\mathrm{MS}}{ L}}\gamma_{u,1} \left[ \zeta_{u,1}, \zeta_{u,2}, ..., \zeta_{u,U}\right],
\label{eq:ver_eff}
\end{equation}
where the values of the $\zeta_{u,n}$ elements are
\begin{itemize}
\item {$\zeta_{u,u}=1$},
\item {$\displaystyle{\zeta_{u,n}=\mathbbm{1}_{\left(\bar{\phi_{u,1}}=\bar{\phi}_{n,1}\right)}+\sum_{m=2}^L \frac{\gamma_{u,2}}{\gamma_{u,1}} \mathbbm{1}_{\left(\bar{\phi}_{u,m}=\bar{\phi}_{n,1}\right)} \mathbbm{1}_{\left(\bar{\theta}_{u,1}=\bar{\theta}_{u,m}\right)}}, \hspace{10pt} \forall n \neq u$},
\end{itemize}
where the summation in $\zeta_{u,n}$ is over the first $L$ elements only due to the sparse channel. Note that the characterization of $\zeta_{u,n}$ is due to the DFT structure of the matrices $\overline{\bA}_\mathrm{BS}$ and $\overline{\bA}_\mathrm{MS}$.

The overall effective channel, $\overline{\bH}$, can be then written as
\begin{equation}
\overline{\bH}=\bD_\mathrm{v} \bP_\mathrm{v},
\end{equation}
where $\bD_\mathrm{v}$ is a diagonal matrix with the diagonal elements $\left[\bD_\mathrm{v}\right]_{u,u}=\sqrt{\frac{N_\mathrm{BS} N_\mathrm{MS}}{ L}}\gamma_{u,1}, u=1, 2, ..., U$, and the $U \times U$ matrix $\bP_\mathrm{v}$ has $\left[\bP_\mathrm{v}\right]_{u,n}=\zeta_{u,n}, \forall u,n$.

The digital zero-forcing precoder is therefore $\bF_\mathrm{BB}=\overline{\bH}^* \left( \overline{\bH}  \overline{\bH}^*  \right)^{-1} \boldsymbol\Lambda$, and the diagonal elements of $\boldsymbol\Lambda$ are chosen to satisfy the precoding power constraint $\left\|\bF_\mathrm{RF} \bff_u^\mathrm{BB}\right\|^2=1$. Using a similar derivation to that in Appendix \ref{app:power}, we get
\small
\begin{equation}
\left[\boldsymbol\Lambda\right]_{u,u}=\frac{\sqrt{\frac{N_\mathrm{BS}N_\mathrm{MS}}{ L}} \left|\gamma_{u,1}\right|}{\sqrt{\left( \left(\bP_\mathrm{v} \bP_\mathrm{v}^*\right)^{-1} \bP_\mathrm{v} \bF_\mathrm{RF}^* \bF_\mathrm{RF} \bP_\mathrm{v}^*  \left(\bP_\mathrm{v} \bP_\mathrm{v}^*\right)^{-1}\right)_{u,u}}}.
\label{eq:ppp}
\end{equation}

\normalsize
Using the designed digital and analog precoders, the rate of user $u$ can be written as
\small
\begin{align}
\bbE\left[R_u\right]&=\bbE\left[\log_2\left(1+\frac{\mathsf{SNR}}{U}\left|\overline{\bh}_u^* \bff_u^\mathrm{BB} \right|^2\right)\right],\\
&= \bbE\left[\log_2\left(1+\frac{\mathsf{SNR}}{U L } \frac{N_\mathrm{BS} N_\mathrm{MS} \left|\gamma_{u,1}\right|^2}{ \left( \left(\bP_\mathrm{v} \bP_\mathrm{v}^*\right)^{-1} \bP_\mathrm{v} \bF_\mathrm{RF}^* \bF_\mathrm{RF} \bP_\mathrm{v}^*  \left(\bP_\mathrm{v} \bP_\mathrm{v}^*\right)^{-1}\right)_{u,u}}\right)\right]. \label{eq:LargePerfect}
\end{align}

\normalsize
Now, we note that the term $\left( \left(\bP_\mathrm{v} \bP_\mathrm{v}^*\right)^{-1} \bP_\mathrm{v} \bF_\mathrm{RF}^* \bF_\mathrm{RF} \bP_\mathrm{v}^*  \left(\bP_\mathrm{v} \bP_\mathrm{v}^*\right)^{-1}\right)_{u,u}=1$ if $\bP_\mathrm{v}=\bI$. Then, considering only the case when $\bP_\mathrm{v}=\bI$  gives a simple lower bound on the achievable rate
\small
\begin{align}
\bbE\left[R_u\right] & \geq \bbE\left[\log_2\left(1+\frac{\mathsf{SNR}}{ U L } N_\mathrm{BS} N_\mathrm{MS} \left|\gamma_{u,1}\right|^2\right) \mathbbm{1} \left(\bP_\mathrm{v}=\bI_U \right)\right], \\
&\stackrel{(a)}{=} \bbE\left[\log_2\left(1+\frac{\mathsf{SNR}}{ U L } N_\mathrm{BS} N_\mathrm{MS} \left|\gamma_{u,1}\right|^2\right)\right] \mathrm{P} \left(\bP_\mathrm{v}=\bI_U \right), \label{eq:Ach}
\end{align}
\normalsize
where (a) is by leveraging the independence between $\gamma_{u,1}$ and the virtual transmit angles of the different users. Thanks to the sparse nature of mmWave channels, this simple bound in \eqref{eq:Ach} can be a tight bound on the achievable rate. Finally, the probability of the event $\bP_\mathrm{v}=\bI$ can be bounded as follows by considering only the cases when all the AoAs are equal or all of them are different
\small
\begin{equation}
\begin{split}
\mathrm{P} \left(\bP_\mathrm{v}=\bI_U \right) &\geq \bbP \left(\bP_\mathrm{v}=\bI_U \left| \bigcap_{u=1}^U \left(\bar{\theta}_{u,1} \neq \bar{\theta}_{u,m}, \forall m \neq 1\right) \right.\right) \mathrm{P} \left(\bigcap_{u=1}^U \left(\bar{\theta}_{u,1} \neq \bar{\theta}_{u,m}, \forall m \neq 1\right) \right) \\
& \hspace{30 pt} + \mathrm{P} \left(\bP_\mathrm{v}=\bI_U \left| \bigcap_{u=1}^U \left(\bar{\theta}_{u,1} = \bar{\theta}_{u,m}, \forall m \neq 1\right) \right) \mathrm{P} \left(\bigcap_{u=1}^U \left(\bar{\theta}_{u,1} = \bar{\theta}_{u,m}, \forall m \neq 1\right) \right.\right),
\end{split}
\end{equation}
\begin{equation}
 \hspace{20pt} \geq \prod_{i=1}^{U-1}\left(1-\frac{i}{N_\mathrm{BS}}\right) \left(1-\frac{L-1}{N_\mathrm{MS}}\right)^{U}+ \mathbbm{1}_{\left(L>1\right)} \prod_{i=1}^{U-1}\left(1-\frac{i L}{N_\mathrm{BS}}\right) \left(\frac{1}{N_\mathrm{MS}}\right)^{\left(L-1\right)U}, \label{eq:ProbI}
\end{equation}
\normalsize
where all these probabilities are calculated from the expression of $\zeta_{u,n}, n \neq u$ (the off-diagonal entries of $\bP_\mathrm{v}$).
\end{proof}

This bound shows the asymptotic optimality of the sum-rate achieved by the proposed hybrid precoding algorithm in the large-dimensional regime, as it approaches $1$ with large numbers of antennas. Hence, the average achievable rate by the proposed algorithm in \eqref{eq:Ach} will be very close to the single-user rate. Indeed, this simple bound can be shown to be tight when the number of paths is very small compared with the number of antennas which is the case in mmWave systems. Also, this bound shows the relatively small importance of the other paths, rather than the strongest path, on the performance as $\frac{L}{N_\mathrm{BS}}\ll 1$ and $\frac{L-1}{N_\mathrm{MS}}\ll 1$. Finally, note that the bound in \eqref{eq:vir_bound} is an approximated bound, as it depends on the asymptotic properties of the virtual channel model in \eqref{eq:virtual2}, which becomes a good approximation when the number of antennas is very large.
\section{Rate Loss with Limited Feedback} \label{sec:Finite}
In this section, we consider RF and digital codebooks with \textit{finite} sizes, and analyze the rate loss due to the joint RF/baseband quantization. Although the analysis will consider the special cases of single-path mmWave channels, and large-dimensional regimes, it helps making important conclusions about the performance of the hybrid precoding over finite-rate feedback channels.
\subsection{Single-Path Channels} \label{subsec:Limited_single_path}
Considering single-path mmWave channels, the following theorem characterizes the average rate loss when the hybrid analog/digital precoders and RF combiners are designed according to Algorithm \ref{alg:MU_Precoding} with the quantized beamsteering RF precoders $\cF, \cW$, and the effective channel RVQ codebook $\cH$.

\begin{theorem}
Let $R_u^\mathrm{Q}$ denote the rate achieved by user $u$ when Algorithm \ref{alg:MU_Precoding} is used to design the hybrid precoders and RF combiners described in \sref{sec:Model} while assuming that
\begin{enumerate}
\item{All channels are single-path, i.e., $L_u=1, u=1, 2, ..., U$.}
\item{The RF precoding and combining vectors, $\bff_u^\mathrm{RF}, u=1,2, ..., U$ and  $\bw_u, u=1, 2, ..., U$, are beamsteering vectors selected from the quantized codebooks $\cF$ and $\cW$.}
\item{Each MS $u$ perfectly knows its channel $\bH_u$, $u=1, 2, ..., U$. }
\item{Each MS $u$ quantizes its effective channel $\overline{\bh}_u$ using a RVQ codebook $\cH$ of size $\left|\cH\right|=2^{B_\mathrm{BB}}$.}
\end{enumerate}

Recall that $R_u$ is the rate achieved by user $u$ with the assumptions in Theorem \ref{Th1}. Then the average rate loss per user, $\overline{\Delta R}_u=\bbE\left[R_u-R_u^\mathrm{Q}\right]$, is upper bounded by
\begin{equation}
\overline{\Delta R}_u \leq \log_2\left(\frac{1+\frac{\mathsf{SNR}}{ U} N_\mathrm{BS} N_\mathrm{MS}\bar{\alpha} \left(1+\frac{U-1}{N_\mathrm{BS}}\right) 2^{-\frac{B_\mathrm{BB}}{U-1}}}{\left|\overline{\mu}_\mathrm{BS}\right|^2 \left|\overline{\mu}_\mathrm{MS}\right|^2}\right),
\end{equation}
where $\left|\overline{\mu}_\mathrm{BS}\right|=\displaystyle{\min_{\bff_u \in \cF} \max_{\bff_n \in \cF}} {\left| \bff_u^* \bff_n\right|}$, and $\left|\overline{\mu}_\mathrm{MS}\right|=\displaystyle{\min_{\bw_u \in \cW} \max_{\bw_n \in \cW}} {\left| \bw_u^* \bw_n\right|}$.
\label{Th2}
\end{theorem}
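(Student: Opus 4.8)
The plan is to attribute the rate loss to two essentially independent mechanisms---the quantization of the RF beamsteering vectors and the quantization of the effective channel---and to show that the former produces the factor $\lvert\overline{\mu}_\mathrm{BS}\rvert^2\lvert\overline{\mu}_\mathrm{MS}\rvert^2$ in the denominator while the latter produces the residual-interference term in the numerator. First I would redo the single-path computation of Theorem \ref{Th1}: with quantized codebooks the first stage still selects the $\bff_u^\mathrm{RF}\in\cF$ and $\bw_u\in\cW$ that best match $\ba_\mathrm{BS}(\phi_u)$ and $\ba_\mathrm{MS}(\theta_u)$, so the effective channel is again $\overline{\bh}_u=\sqrt{N_\mathrm{BS}N_\mathrm{MS}}\,\alpha_u\,(\bw_u^*\ba_\mathrm{MS}(\theta_u))\,\ba_\mathrm{BS}^*(\phi_u)\bF_\mathrm{RF}$, but now the self-alignment gains $\lvert\bw_u^*\ba_\mathrm{MS}(\theta_u)\rvert^2$ and $\lvert\ba_\mathrm{BS}^*(\phi_u)\bff_u^\mathrm{RF}\rvert^2$ fall short of one. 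Exploiting the Dirichlet-kernel form of the inner products for beamsteering codebooks, the worst-case matching gain over any true angle (attained at the midpoint between adjacent codewords) is lower bounded by the adjacent-codeword coherence, giving $\lvert\bw_u^*\ba_\mathrm{MS}(\theta_u)\rvert^2\ge\lvert\overline{\mu}_\mathrm{MS}\rvert^2$ and $\lvert\ba_\mathrm{BS}^*(\phi_u)\bff_u^\mathrm{RF}\rvert^2\ge\lvert\overline{\mu}_\mathrm{BS}\rvert^2$.

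Next I would introduce an intermediate rate $R_u^\mathrm{RF}$ obtained with the quantized RF vectors but a zero-forcing digital precoder built from the \emph{exact} effective channels, and split $R_u-R_u^\mathrm{Q}=(R_u-R_u^\mathrm{RF})+(R_u^\mathrm{RF}-R_u^\mathrm{Q})$. The first difference compares two interference-free rates whose only discrepancy is the RF matching loss; arguing that quantizing the RF stage shrinks the interference-free SINR by at least the factor $c=\lvert\overline{\mu}_\mathrm{BS}\rvert^2\lvert\overline{\mu}_\mathrm{MS}\rvert^2$ and applying the elementary inequality $\frac{1+x}{1+cx}\le 1/c$ for $0<c\le 1$ yields $R_u-R_u^\mathrm{RF}\le-\log_2 c$, which is exactly the denominator. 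For the second difference I would follow the standard limited-feedback argument: the quantized zero-forcing precoder nulls $\widehat{\bh}_u$ rather than $\overline{\bh}_u$, so decomposing $\overline{\bh}_u$ into its projection onto $\widehat{\bh}_u$ and an orthogonal part shows the residual interference at user $u$ is governed solely by the orthogonal part, with expected power $\bbE[\lVert\overline{\bh}_u\rVert^2\sin^2\angle(\overline{\bh}_u,\widehat{\bh}_u)]$. Independence of the RVQ codebook from the channel and the dimension-$U$ RVQ bound $\bbE[\sin^2\angle(\overline{\bh}_u,\widehat{\bh}_u)]\le 2^{-B_\mathrm{BB}/(U-1)}$ from \cite{Jindal} then control this factor.

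Then I would evaluate $\bbE[\lVert\overline{\bh}_u\rVert^2]$, which factors as $N_\mathrm{BS}N_\mathrm{MS}\,\bbE[\lvert\alpha_u\rvert^2]\,\bbE[\lvert\bw_u^*\ba_\mathrm{MS}(\theta_u)\rvert^2]\,\bbE[\lVert\ba_\mathrm{BS}^*(\phi_u)\bF_\mathrm{RF}\rVert^2]$. Bounding the MS gain by one, using $\bbE[\lvert\alpha_u\rvert^2]=\bar\alpha$, and splitting the last factor into the self term ($\le 1$) plus the $U-1$ cross terms, each of which averages to $1/N_\mathrm{BS}$ because the other users' AoDs are independent and uniform, produces the factor $(1+\frac{U-1}{N_\mathrm{BS}})$. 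Combining the interference power $\frac{P}{U}\bbE[I_u^\mathrm{Q}]$ with the noise $\sigma^2$ and invoking concavity of $\log_2$ (Jensen's inequality) to pull the expectation inside gives $R_u^\mathrm{RF}-R_u^\mathrm{Q}\le\log_2(1+\frac{\mathsf{SNR}}{U}N_\mathrm{BS}N_\mathrm{MS}\bar\alpha(1+\frac{U-1}{N_\mathrm{BS}})2^{-B_\mathrm{BB}/(U-1)})$, the numerator; adding the two pieces yields the claimed bound.

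The step I expect to be the main obstacle is the signal-power bookkeeping across the split. In the first difference, proving that the quantized-RF interference-free SINR is at least $c$ times the perfect-RF SINR is not immediate, because quantizing $\bF_\mathrm{RF}$ also changes the conditioning of the effective-channel Gram matrix and hence the zero-forcing normalization, so the reduction factor must be shown to be no better than $\lvert\overline{\mu}_\mathrm{BS}\rvert^2\lvert\overline{\mu}_\mathrm{MS}\rvert^2$ rather than simply assumed. Likewise, in the second difference the interference-free rate $R_u^\mathrm{RF}$ and the quantized rate $R_u^\mathrm{Q}$ do not share the same desired-signal power, so the clean manipulation $\frac{1+a/b}{1+a/(b+c)}\le 1+c/b$ only applies after one argues that this discrepancy does not inflate the loss in expectation. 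Absorbing these two signal-power discrepancies harmlessly, together with justifying the factorization of $\bbE[\lVert\overline{\bh}_u\rVert^2\sin^2(\cdot)]$ via independence, is the delicate part; the RF-alignment lower bound and the cross-term averaging are comparatively routine once the beamsteering/Dirichlet structure is in hand.
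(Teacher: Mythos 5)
Your plan assembles the right quantitative ingredients --- the worst-case alignment bounds $|\bw_u^*\ba_\mathrm{MS}(\theta_u)|\geq|\overline{\mu}_\mathrm{MS}|$, $|\ba_\mathrm{BS}^*(\phi_u)\bff_u^\mathrm{RF}|\geq|\overline{\mu}_\mathrm{BS}|$, the RVQ decomposition with $\bbE[|\bz^*\widehat{\bff}_n^\mathrm{BB}|^2]=\frac{1}{U-1}$ and $\bbE[\sin^2(\cdot)]\leq 2^{-B_\mathrm{BB}/(U-1)}$, the cross-term averaging giving $\bbE[\|\overline{\bh}_u\|^2]\leq N_\mathrm{BS}N_\mathrm{MS}\bar{\alpha}(1+\frac{U-1}{N_\mathrm{BS}})$, and Jensen --- and your treatment of the baseband-quantization term essentially coincides with the paper's. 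But the telescoping decomposition through the intermediate rate $R_u^\mathrm{RF}$ has a genuine gap at exactly the step you flag, and as stated that step fails. The claim that the quantized-RF interference-free SINR is at least $c=|\overline{\mu}_\mathrm{BS}|^2|\overline{\mu}_\mathrm{MS}|^2$ times the perfect-RF SINR is false pointwise: with a finite codebook $\cF$, two users' AoDs can fall in the same (or adjacent) quantization bins, so $\hat{\phi}_u=\hat{\phi}_n$ makes two columns of $\bF_\mathrm{RF}$ coincide, the effective channel matrix becomes (nearly) rank-deficient, and the zero-forcing power normalization drives the quantized signal power $S^\mathrm{Q}$ toward zero while the perfect-RF signal power stays bounded away from zero --- an event of positive probability. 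Hence $R_u-R_u^\mathrm{RF}\leq-\log_2 c$ cannot be extracted from $\frac{1+x}{1+cx}\leq\frac{1}{c}$; any repair must work in expectation, which is precisely the content you deferred, and the same normalization mismatch recurs in your second difference where $R_u^\mathrm{RF}$ and $R_u^\mathrm{Q}$ have different desired-signal powers.

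The paper's proof (Appendix \ref{app:Prop_Finite}) circumvents the Gram-conditioning problem by never comparing zero-forcing normalizations at all. It bounds $\overline{\Delta R}_u\leq\bbE\left[\log_2\frac{1+S}{1+S^\mathrm{Q}}\right]+\bbE\left[\log_2(1+I^\mathrm{Q})\right]$, applies $\log(1+x)-\log(1+y)\leq\log x-\log y$, and factors each signal power as $\|\overline{\bh}_u\|^2\,|\widetilde{\bh}_u^*\bff_u^\mathrm{BB}|^2$. The missing idea in your proposal is the observation that each user's ZF precoding direction is independent of that user's own normalized effective channel, so the expected log beamforming gains of the unquantized pair $(\widetilde{\bh}_u,\bff_u^\mathrm{BB})$ and the quantized pair $(\widetilde{\bh}_u^{\mathrm{Q}},\widehat{\bff}_u^\mathrm{BB})$ are \emph{equal} (the argument borrowed from \cite{Jindal,Raghavan}) and cancel; what survives is only the ratio of effective-channel norms, $\bbE\left[\log_2\frac{1+\sum_{n\neq u}|\beta_{u,n}|^2}{|\mu_\mathrm{MS}|^2|\mu_\mathrm{BS}|^2(1+\sum_{n\neq u}|\beta_{u,n}^\mathrm{Q}|^2)}\right]$. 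There the quantized cross terms $\beta_{u,n}^\mathrm{Q}$ carry $\mu_\mathrm{BS}^*$ in their denominator, so $\bbE[\log_2(1+\sum_{n\neq u}|\beta_{u,n}|^2)]\leq\bbE[\log_2(1+\sum_{n\neq u}|\beta_{u,n}^\mathrm{Q}|^2)]$ and the whole RF loss collapses to $-\log_2(|\overline{\mu}_\mathrm{BS}|^2|\overline{\mu}_\mathrm{MS}|^2)$ without ever touching the conditioning of $\bF_\mathrm{RF}^*\bF_\mathrm{RF}$. To salvage your route you would need an expectation-level analogue of this projection-equality for each leg of your split; without it, the first difference is unproven (and pointwise unprovable), so the proposal as written does not establish the theorem.
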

\begin{proof}
See appendix \ref{app:Prop_Finite}.
\end{proof}

Theorem \ref{Th2} characterizes an upper bound on the rate loss due to quantization. It can be used to determine how the number of baseband and RF quantization bits should scale with the different system and channel parameters to be within a constant gap of the optimal rate. This is captured in the following corollary.

\begin{corollary}
To maintain a rate loss of $\log_2\left(b\right)$ bps/Hz per user, the number of baseband quantization bits should satisfy
\small
\begin{equation}
B_\mathrm{BB}=\frac{U-1}{3} \mathsf{SNR}_\mathrm{dB} + (U-1) \log_2 \left(\frac{N_\mathrm{BS}N_\mathrm{MS}}{ U}\bar{\alpha} \left(1-\frac{U-1}{N_\mathrm{BS}}\right)\right)-(U-1) \log_2\left(\left|\overline{\mu}_\mathrm{BS}\right|^2 \left|\overline{\mu}_\mathrm{MS}\right|^2 b -1\right).
\end{equation}
\normalsize
\end{corollary}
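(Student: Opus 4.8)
The plan is to read this off directly from the rate-loss bound in Theorem~\ref{Th2}. Since that theorem guarantees
$$\overline{\Delta R}_u \leq \log_2\left(\frac{1+\frac{\mathsf{SNR}}{U} N_\mathrm{BS} N_\mathrm{MS}\bar{\alpha} \left(1+\frac{U-1}{N_\mathrm{BS}}\right) 2^{-\frac{B_\mathrm{BB}}{U-1}}}{\left|\overline{\mu}_\mathrm{BS}\right|^2 \left|\overline{\mu}_\mathrm{MS}\right|^2}\right),$$
a sufficient condition for $\overline{\Delta R}_u \leq \log_2(b)$ is to force the right-hand side to be at most $\log_2(b)$; the threshold value of $B_\mathrm{BB}$ is obtained by setting it equal to $\log_2(b)$ and solving. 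Because $\log_2$ is monotone, I can equivalently require the argument of the outer logarithm to equal $b$, which reduces the whole statement to a single scalar equation in $B_\mathrm{BB}$.

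First I would clear the denominator and subtract the constant $1$ to isolate the only term carrying $B_\mathrm{BB}$, namely
$$\frac{\mathsf{SNR}}{U} N_\mathrm{BS} N_\mathrm{MS}\bar{\alpha}\left(1+\frac{U-1}{N_\mathrm{BS}}\right) 2^{-\frac{B_\mathrm{BB}}{U-1}} = b\,\left|\overline{\mu}_\mathrm{BS}\right|^2 \left|\overline{\mu}_\mathrm{MS}\right|^2 - 1.$$
Taking $\log_2$ of both sides and multiplying through by $-(U-1)$ then expresses $B_\mathrm{BB}$ as a difference of two logarithmic terms: a ``budget'' term $(U-1)\log_2\!\big(\tfrac{\mathsf{SNR}}{U}N_\mathrm{BS}N_\mathrm{MS}\bar{\alpha}(1+\tfrac{U-1}{N_\mathrm{BS}})\big)$ coming from the desired-signal power, minus a ``credit'' term $(U-1)\log_2\!\big(b\,|\overline{\mu}_\mathrm{BS}|^2|\overline{\mu}_\mathrm{MS}|^2 - 1\big)$ that increases with the tolerated loss $b$ and the RF beamsteering codebook correlations $|\overline{\mu}_\mathrm{BS}|$, $|\overline{\mu}_\mathrm{MS}|$.

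The only step that is not pure rearrangement is the unit conversion that puts the SNR in decibels. I would split the budget term as $(U-1)\log_2(\mathsf{SNR}) + (U-1)\log_2\!\big(\tfrac{N_\mathrm{BS}N_\mathrm{MS}}{U}\bar{\alpha}(1+\tfrac{U-1}{N_\mathrm{BS}})\big)$ and then use $\mathsf{SNR}_\mathrm{dB}=10\log_{10}\mathsf{SNR}$, so that $\log_2(\mathsf{SNR}) = \tfrac{\mathsf{SNR}_\mathrm{dB}}{10}\log_2 10 \approx \tfrac{\mathsf{SNR}_\mathrm{dB}}{3}$. Substituting this identity reproduces the stated expression for $B_\mathrm{BB}$. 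I do not anticipate any genuine obstacle: the result is straightforward algebra applied to the already-established bound of Theorem~\ref{Th2}, and the coefficient $1/3$ is simply the customary rounding of $\log_2 10 / 10 \approx 0.332$ used to interpret the high-SNR scaling in bits per decibel.
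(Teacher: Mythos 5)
Your proposal is correct and is exactly the paper's (implicit) argument: the corollary carries no separate proof in the paper because it follows by setting the Theorem~\ref{Th2} upper bound equal to $\log_2(b)$, solving the resulting scalar equation for $B_\mathrm{BB}$, and converting units via $\log_2(\mathsf{SNR})=\tfrac{\mathsf{SNR}_\mathrm{dB}}{10}\log_2 10\approx\tfrac{\mathsf{SNR}_\mathrm{dB}}{3}$, just as you do. One small flag: your (correct) algebra yields the factor $\left(1+\tfrac{U-1}{N_\mathrm{BS}}\right)$ inherited from Theorem~\ref{Th2}, whereas the corollary as printed shows $\left(1-\tfrac{U-1}{N_\mathrm{BS}}\right)$ --- an apparent sign misprint in the paper (the minus sign would prescribe \emph{fewer} bits than the bound requires and hence could not guarantee the stated loss), so your claim that the substitution ``reproduces the stated expression'' holds only up to that typo.
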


This corollary shows that the number of bits used to quantize the effective channels should increase linearly with the SNR in dB for any given number of users and logarithmically with the number of antennas. It also illustrates that more baseband quantization bits will be needed if the RF beamsteering vectors are poorly quantized, i.e., if $\left|\overline{\mu}_\mathrm{BS}\right|$ and $\left|\overline{\mu}_\mathrm{MS}\right|$ are small.

The relation between the RF and baseband quantization bits is important to understand the behavior of  hybrid precoding algorithms. Indeed, in some cases, e.g., when the effective channel is poorly quantized, the performance of analog-only beamforming can exceed that of the hybrid precoding. In \sref{sec:Results}, the hybrid precoding and beamsteering algorithms are compared for different quantization settings, and some insights are given to highlight the cases in which using a digital layer to manage the multi-user interference is useful.
\subsection{Large-dimensional Regime}\label{subsec:Limited_Large}
When large antenna arrays are used at both the BS and MS's, using the virtual channel model in \sref{subsec:Perfect_large}, we can bound the average rate loss using the proposed hybrid precoding algorithm with finite size codebooks.

\begin{figure}[t]
\centering
\subfigure[center][{}]{
\includegraphics[width=0.470\columnwidth]{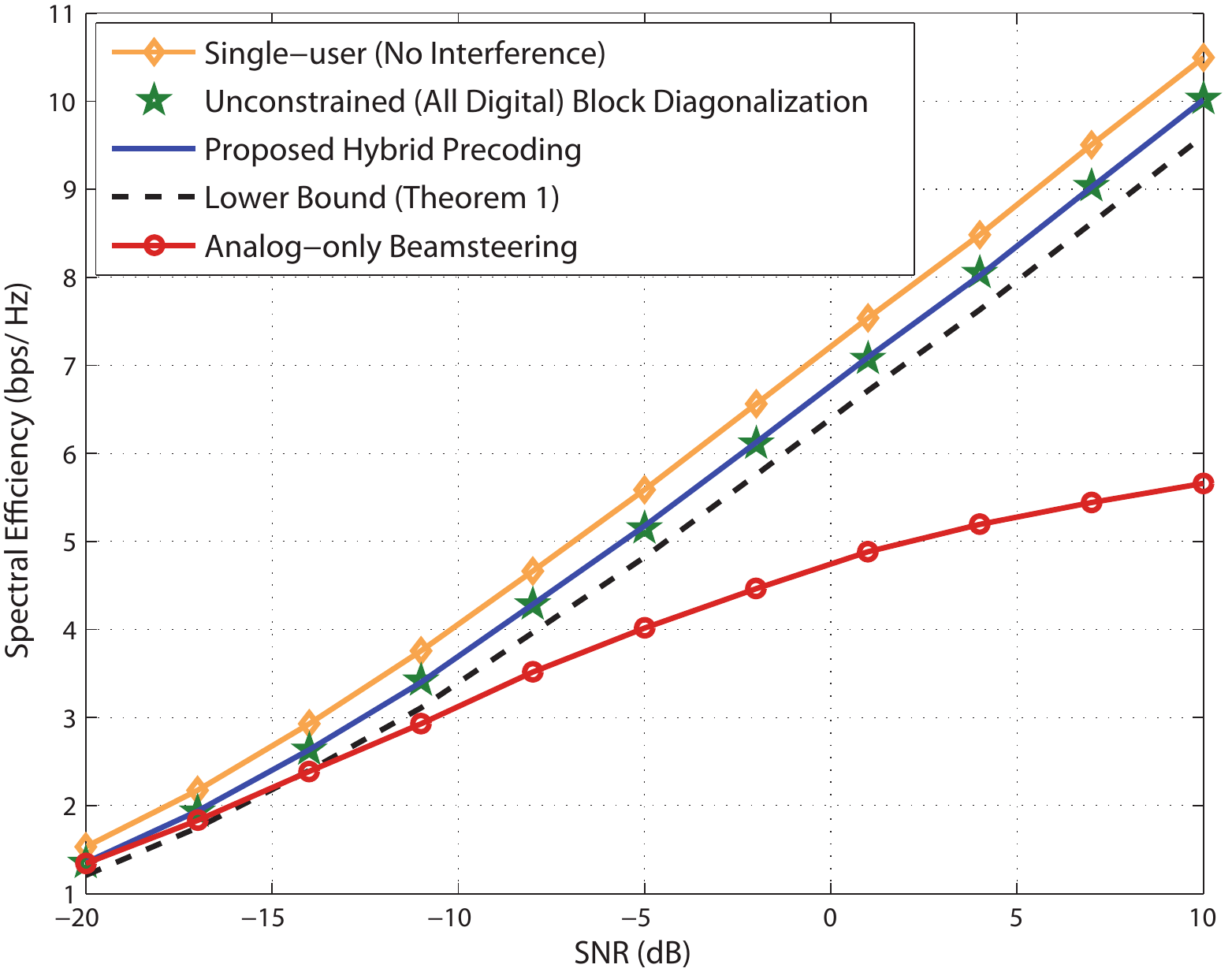}
\label{fig:Perfect_SNR}}
\subfigure[center][{}]{
\includegraphics[width=0.470\columnwidth]{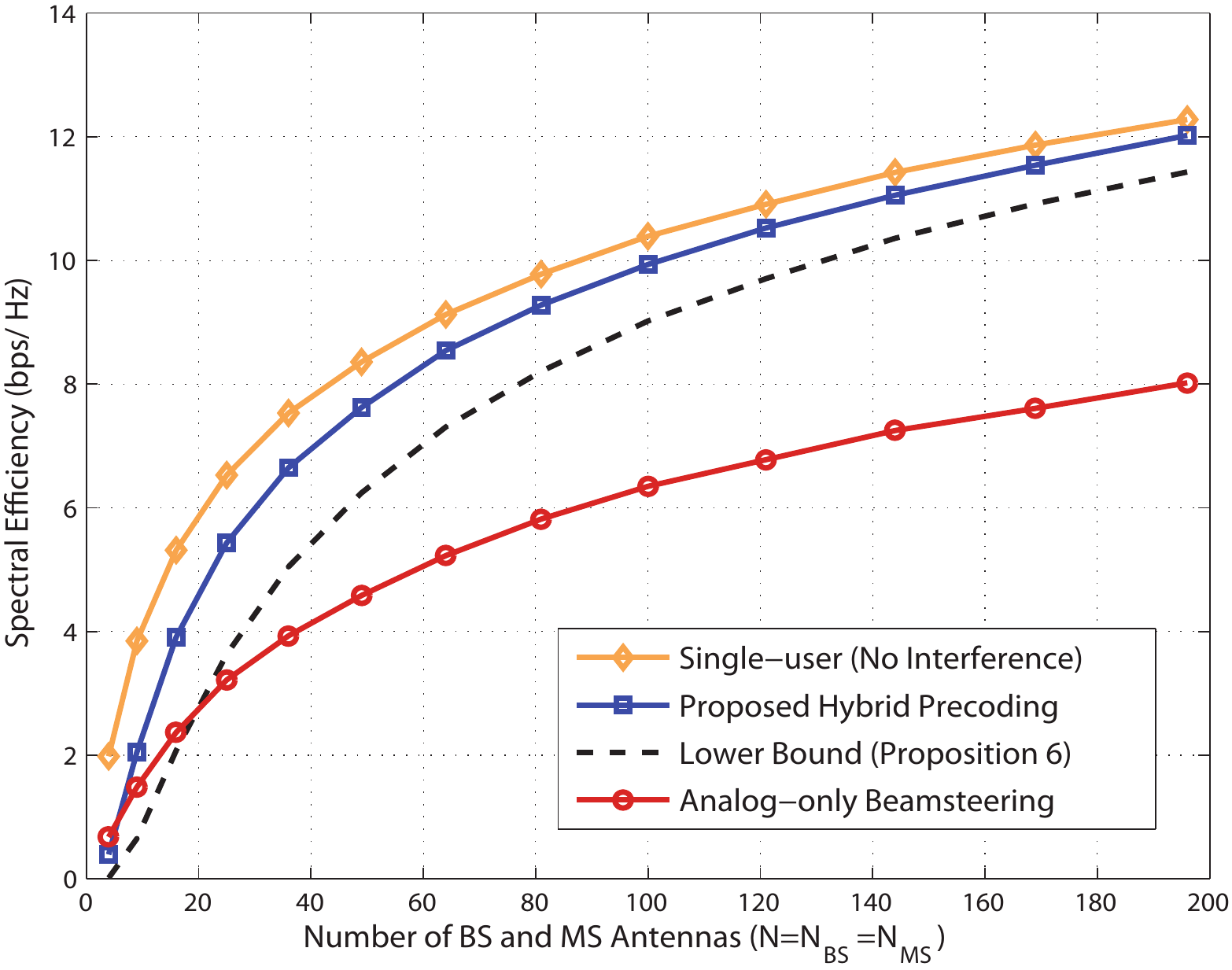}
\label{fig:Large_Arrays}}
\caption{Achievable rates using the hybrid  precoding and beamsteering algorithms with perfect channel knowledge. Single-path channels are assumed in (a), while channels with $L=3$ paths are examined in (b). }
\label{fig:Fig3}
\vspace{-10pt}
\end{figure}

\begin{proposition}
Using Algorithm \ref{alg:MU_Precoding} to design the hybrid precoders at the BS and RF combiners at the MSs, with the assumptions in Theorem \ref{Th2}, and adopting the virtual channel model in \eqref{eq:virtual2}, the average rate loss per user due to quantization, $\overline{\Delta R}_u=\bbE\left[R_u-R_u^\mathrm{Q}\right]$, is upper bounded by
\begin{equation}
\overline{\Delta R}_u \leq \log_2\left(1+\frac{\mathsf{SNR}}{ U} \bar{\alpha} N_\mathrm{BS} N_\mathrm{MS} \left(1+\frac{U-1}{N_\mathrm{BS}}\left(1+\frac{L-1}{N_\mathrm{BS}N_\mathrm{MS}}\right)\right) 2^{-\frac{B_\mathrm{BB}}{U-1}}\right).
\end{equation}
\end{proposition}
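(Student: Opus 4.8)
The plan is to reuse the limited-feedback zero-forcing/RVQ machinery behind Theorem~\ref{Th2}, but to feed it the virtual-model effective channel derived in the proof of Proposition~\ref{prop:Prop6} in place of the single-path effective channel. First I would invoke the first-stage analysis of Proposition~\ref{prop:Prop6}: under the virtual channel model the RF beamforming/combining vectors align with the strongest virtual transmit and receive directions, so $\bF_\mathrm{RF}=[\ba_\mathrm{BS}(\bar{\phi}_{1,1}),\dots,\ba_\mathrm{BS}(\bar{\phi}_{U,1})]$ and the $u$th effective channel is $\overline{\bh}_u=\sqrt{N_\mathrm{BS}N_\mathrm{MS}/L}\,\gamma_{u,1}[\zeta_{u,1},\dots,\zeta_{u,U}]$ with $\zeta_{u,u}=1$. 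A first observation is that, because these virtual directions lie exactly on the DFT grid that generates the beamsteering codebook, the RF quantization is lossless in this regime; this is why the factors $\left|\overline{\mu}_\mathrm{BS}\right|^2\left|\overline{\mu}_\mathrm{MS}\right|^2$ present in Theorem~\ref{Th2} disappear here and only the baseband RVQ quantization contributes to the rate loss.

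Next I would set up the rate-loss bound exactly as in the zero-forcing/RVQ analysis. Writing $\bff_n^\mathrm{Q,BB}$ for the zero-forcing precoders built from the quantized effective channels $\widehat{\bh}_u$, the quantization only introduces residual multi-user interference, so the standard limited-feedback argument bounds the per-user loss by $R_u-R_u^\mathrm{Q}\le\log_2\!\big(1+\tfrac{\mathsf{SNR}}{U}\sum_{n\neq u}|\overline{\bh}_u^*\bff_n^\mathrm{Q,BB}|^2\big)$, and Jensen's inequality moves the expectation inside the logarithm to give $\overline{\Delta R}_u\le\log_2\!\big(1+\tfrac{\mathsf{SNR}}{U}\,\bbE[I_u]\big)$ with $I_u=\sum_{n\neq u}|\overline{\bh}_u^*\bff_n^\mathrm{Q,BB}|^2$. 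Decomposing $\overline{\bh}_u$ into its component along $\widehat{\bh}_u$ and an orthogonal direction, and using the zero-forcing orthogonality $\widehat{\bh}_u^*\bff_n^\mathrm{Q,BB}=0$, the interference collapses to $I_u\le\|\overline{\bh}_u\|^2\sin^2\!\angle(\overline{\bh}_u,\widehat{\bh}_u)$. Since RVQ uses an isotropic codebook, the quantization angle is distributed independently of both the magnitude and the direction of $\overline{\bh}_u$, so $\bbE[I_u]\le\bbE[\|\overline{\bh}_u\|^2]\,\bbE[\sin^2\!\angle(\overline{\bh}_u,\widehat{\bh}_u)]$, and the RVQ bound gives $\bbE[\sin^2\!\angle(\overline{\bh}_u,\widehat{\bh}_u)]\le 2^{-B_\mathrm{BB}/(U-1)}$ since the effective channel lives in $\bbC^U$.

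The remaining work is to evaluate $\bbE[\|\overline{\bh}_u\|^2]$, which is where the virtual-model bookkeeping enters and which I expect to be the main obstacle. I would write $\|\overline{\bh}_u\|^2=\tfrac{N_\mathrm{BS}N_\mathrm{MS}}{L}|\gamma_{u,1}|^2\big(1+\sum_{n\neq u}|\zeta_{u,n}|^2\big)$ and average the off-diagonal terms using the collision probabilities on the DFT grid: the event $\bar{\phi}_{u,1}=\bar{\phi}_{n,1}$ occurs with probability $1/N_\mathrm{BS}$, while each secondary-path contribution to $\zeta_{u,n}$ additionally requires a receive-direction match (another independent $1/N_\mathrm{MS}$ factor) over the $L-1$ weaker paths. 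Collecting these contributions, using the angle/gain independence so that the factor $\tfrac1L\bbE[|\gamma_{u,1}|^2]$ can be bounded by $\bar\alpha$ (since $\sum_m\bbE[|\gamma_{u,m}|^2]=L\bar\alpha$ and $\gamma_{u,1}$ is the largest coefficient), and identifying the resulting factor as $1+\tfrac{U-1}{N_\mathrm{BS}}\big(1+\tfrac{L-1}{N_\mathrm{BS}N_\mathrm{MS}}\big)$, delivers the claimed bound. The delicate step is precisely the accounting of the $|\zeta_{u,n}|^2$ expectations: the secondary-path terms couple transmit- and receive-direction collision events across the weaker paths and must be weighted by the relative virtual gains $|\gamma_{u,m}/\gamma_{u,1}|^2$, and care is needed so that these cross terms land at the stated order $\tfrac{L-1}{N_\mathrm{BS}N_\mathrm{MS}}$; I would also carefully justify the factorization $\bbE[I_u]=\bbE[\|\overline{\bh}_u\|^2]\bbE[\sin^2\!\angle(\overline{\bh}_u,\widehat{\bh}_u)]$, which rests on the isotropy of the RVQ codebook rather than on any isotropy of the structured effective channel itself.
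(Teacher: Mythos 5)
Your proposal follows exactly the route the paper intends: the paper's entire proof of this proposition is the single remark that it ``is similar to Theorem \ref{Th2}, but leverages the definition of the effective channel in \eqref{eq:ver_eff}'', and that is precisely what you execute --- Theorem \ref{Th2}'s zero-forcing/RVQ loss decomposition, Jensen's inequality, and the quantization-error split with $\bbE\left[\sin^2\angle(\cdot,\cdot)\right]\leq 2^{-B_\mathrm{BB}/(U-1)}$, applied to the virtual-model effective channel of Proposition \ref{prop:Prop6}. Your observation that the RF quantization factors $\left|\overline{\mu}_\mathrm{BS}\right|^2\left|\overline{\mu}_\mathrm{MS}\right|^2$ drop out because the selected beams sit on the virtual (DFT) grid is the right reading of why they are absent from the stated bound. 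Also, the cross-term worry you raise about $|\zeta_{u,n}|^2$ dissolves: the collision events $\left\{\bar{\phi}_{u,1}=\bar{\phi}_{n,1}\right\}$ and $\left\{\bar{\phi}_{u,m}=\bar{\phi}_{n,1},\,\bar{\theta}_{u,1}=\bar{\theta}_{u,m}\right\}$, $m=2,\dots,L$, are mutually exclusive because the pairs $\left(\bar{\theta}_{u,m},\bar{\phi}_{u,m}\right)$ are distinct grid points, so $|\zeta_{u,n}|^2$ expands with no surviving cross terms and the gain ratios $\left|\gamma_{u,m}/\gamma_{u,1}\right|\leq 1$ handle the weighting cleanly.

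The one substantive discrepancy is the final constant. Your (correct) bookkeeping --- transmit collision with probability $1/N_\mathrm{BS}$, and each weaker path contributing at most $1/\left(N_\mathrm{BS}N_\mathrm{MS}\right)$ --- yields the factor $1+\frac{U-1}{N_\mathrm{BS}}\left(1+\frac{L-1}{N_\mathrm{MS}}\right)$, whereas the proposition states $1+\frac{U-1}{N_\mathrm{BS}}\left(1+\frac{L-1}{N_\mathrm{BS}N_\mathrm{MS}}\right)$; the printed form requires each weaker-path term to carry a total weight $1/\left(N_\mathrm{BS}^2 N_\mathrm{MS}\right)$, i.e., an extra $1/N_\mathrm{BS}$ that no step of your argument --- nor any natural accounting of the collision probabilities --- produces, and which you explicitly leave unresolved. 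Since the paper supplies no derivation, this looks like a slip in the paper's factoring (keeping the full per-path probability $\frac{1}{N_\mathrm{BS}N_\mathrm{MS}}$ inside the parentheses after $\frac{U-1}{N_\mathrm{BS}}$ has already been pulled out) rather than a missing idea on your side; but as written, your proof establishes the marginally weaker bound with $\frac{L-1}{N_\mathrm{MS}}$ in place of $\frac{L-1}{N_\mathrm{BS}N_\mathrm{MS}}$, and you should either state it that way or flag the paper's constant, rather than assert the printed inequality. The two forms are in any case asymptotically indistinguishable in the sparse regime $L\ll N_\mathrm{MS}$ that the proposition targets.
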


The proof is similar to Theorem \ref{Th2}, but leverages the definition of the effective channel in \eqref{eq:ver_eff}. In addition to characterizing the rate loss due to quantization for more general settings with multi-path mmWave channels, this result illustrates the marginal impact of the other paths on the performance of mmWave systems as $\frac{L-1}{N_\mathrm{BS}N_\mathrm{MS}}\ll 1$. In other words, this indicates that considering only the path with the maximum gain gives a very good  performance.


\section{Simulation Results} \label{sec:Results}


In this section, we evaluate the performance of the proposed hybrid analog/digital precoding algorithm and derived bounds using numerical simulations. All the plotted rates in \figref{fig:Fig3}-\figref{fig:Fig6} are the averaged achievable rates per user; $\bbE\left[\frac{1}{U} \sum_{u=1}^{U} R_u\right]$ with $R_u$ in equation \ref{eq:Rate}.

First, we compare the achievable rates without quantization loss and with perfect effective channel knowledge in \figref{fig:Perfect_SNR} and \figref{fig:Large_Arrays}. In \figref{fig:Perfect_SNR}, we consider the system model in \sref{sec:Model} with a BS employing an $8 \times 8$ UPA with $4$ MS's, each having a $4 \times 4$ UPA. The channels are single-path, the azimuth AoAs/AoDs are assumed to be uniformly distributed in $[0, 2 \pi ]$, and the elevation AoAs/AoDs are uniformly distributed in $[- \frac{\pi}{2}, \frac{\pi}{2}]$. The SNR in the plots is defined as $\text{SNR}=\frac{P \bar{\alpha}}{\sigma^2 U}$. The rate achieved by the proposed hybrid precoding/combining algorithm is compared with the single-user rate and the rate obtained by beamsteering. These rates are also compared with the performance of the unconstrained (all digital) block diagonalization with dominant eigenmode transmission \cite{spencer2004zero}. Block diagonalization requires that $N_\mathrm{BS}-\mathrm{rank}\left(\left[\bH_1^T \ ...\ \bH_{u-1}^T \bH_{u+1}^T \ ... \ \bH_U^T\right]^T\right)>0, \ \forall u$ which is expected to be satisfied with high probability in mmWave systems with large arrays and sparse channels. The implementation of block diagonalization using hybrid precoding, however, is difficult due to the reasons mentioned in \sref{sec:Form}. The figure indicates that the performance of hybrid precoding is very close to the single-user rate thanks to cancelling the residual multi-user interference. It also shows that the proposed hybrid precoding algorithm achieves almost the same performance of the unconstrained block diagonalization, and offers a good gain over analog-only beamsteering solutions. The gain over beamsteering increases with SNR as the beamsteering rate starts to be interference limited. The tightness of the derived lower bound in Theorem \ref{Th1} is also shown.

\begin{figure}[t]
\centering
\subfigure[center][{}]{
\includegraphics[width=0.47\columnwidth]{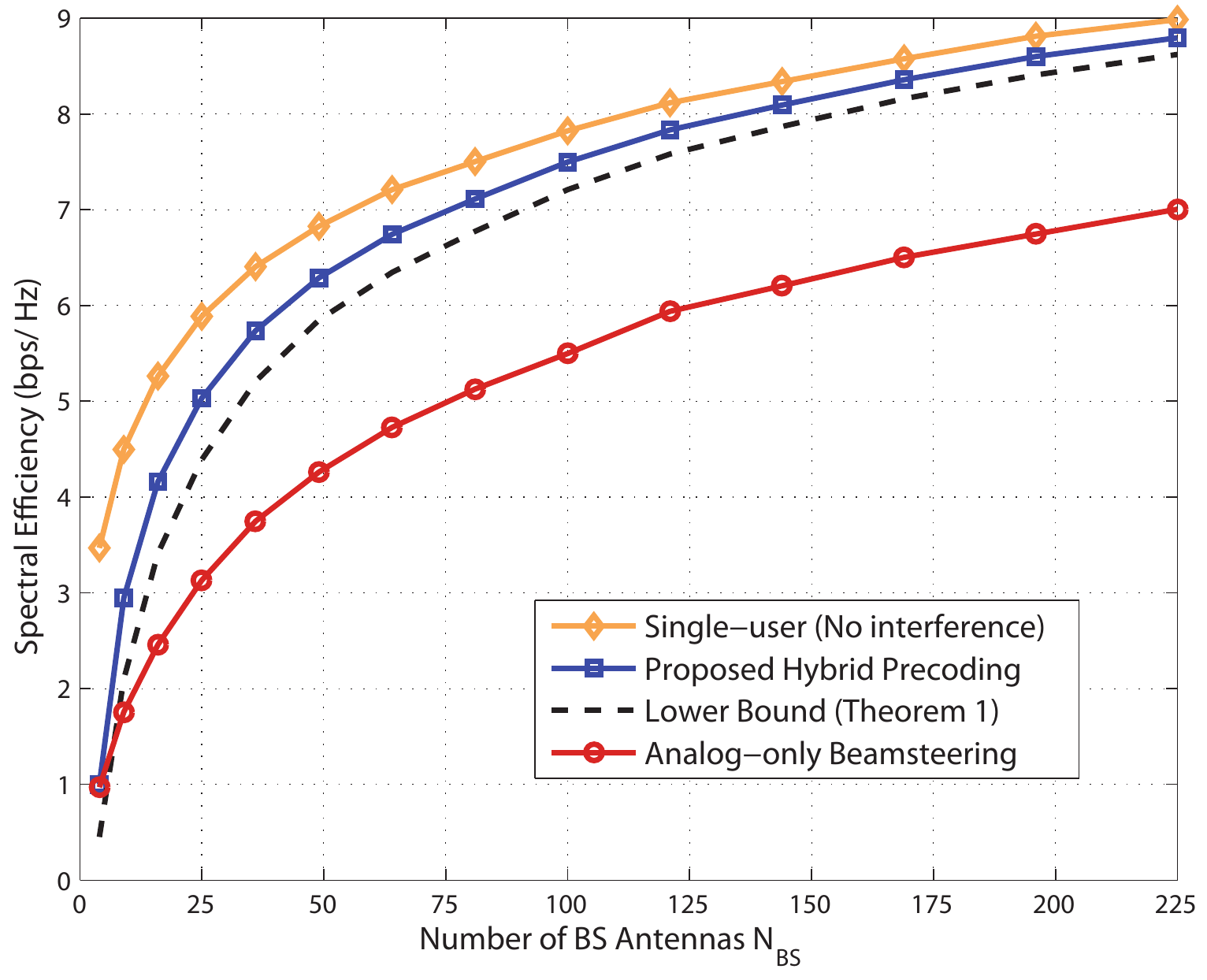}
\label{fig:Perfect_Nbs}}
\subfigure[center][{}]{
\includegraphics[width=0.473\columnwidth]{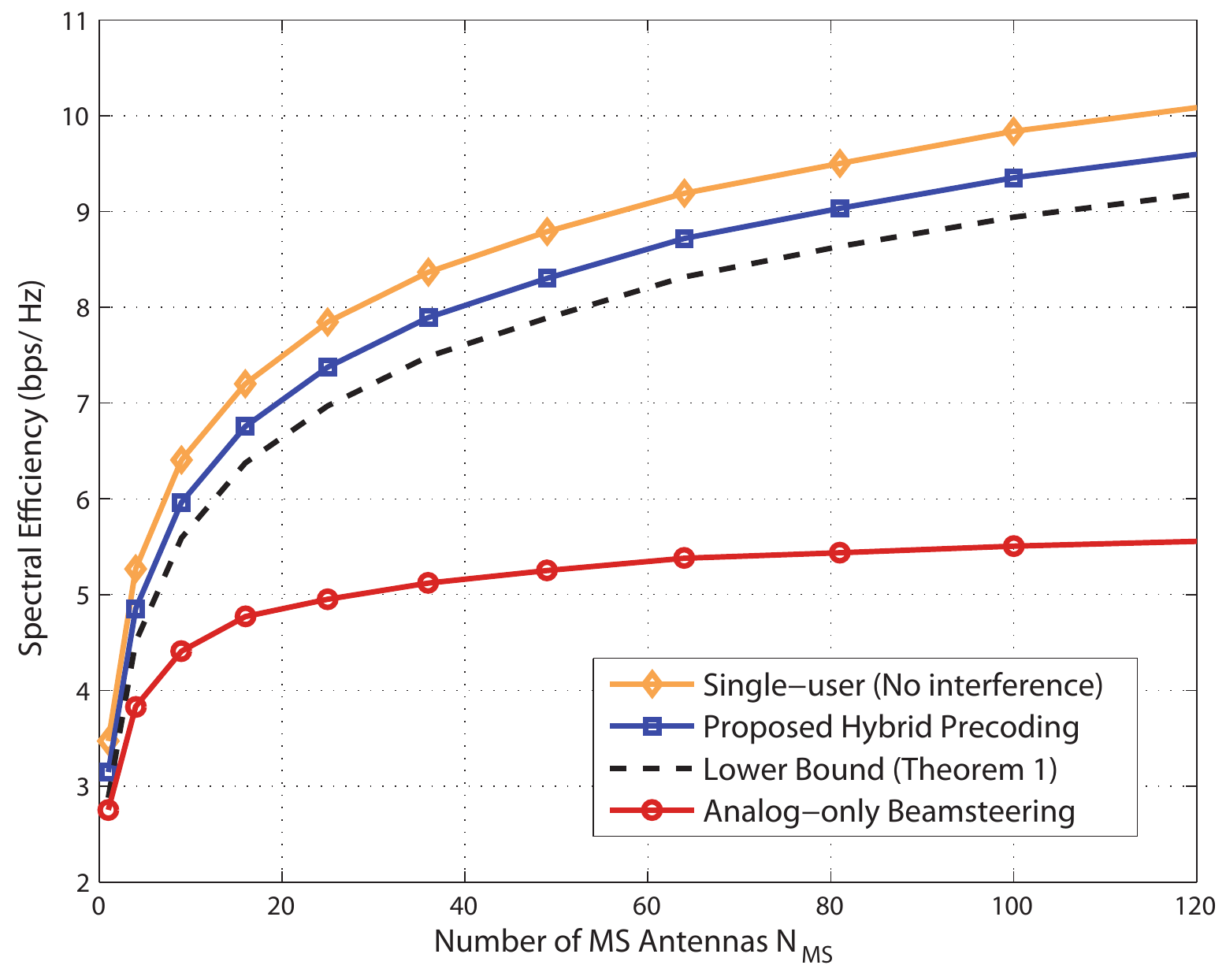}
\label{fig:Perfect_Nms}}
\caption{Achievable rates using the hybrid  precoding and beamsteering algorithms with perfect channel knowledge. In (a), the performance of hybrid precoding is shown to approach the single-user rate with large numbers of BS antennas. In (b), the performance gap between hybrid precoding and beamsteering increases with more MS antennas. }
\label{fig:Fig4}
\vspace{-10pt}
\end{figure}

In \figref{fig:Large_Arrays}, we consider the same setup, but when each channel has $L=3$ paths. The rates of the single-user, hybrid precoding, and beamsteering are simulated with different numbers of BS and MS antennas, assuming that $N_\mathrm{BS}=N_\mathrm{MS}$. The bound derived in Proposition \ref{prop:Prop6} was also plotted where it is shown to be tight at large number of antennas as discussed in \sref{subsec:Perfect_large}.

In \figref{fig:Perfect_Nbs}, the same setup in \figref{fig:Perfect_SNR} is considered  at SNR$=0$ dB, but with different values of BS antennas. The figure shows that even at very large numbers of antennas, there is still a considerable gain of hybrid precoding over beamsteering. This figure also shows that the difference between hybrid precoding and the single-user rate decreases at a large number of BS antennas which validates the second part of Proposition \ref{Prop1}.

In \figref{fig:Perfect_Nms}, the same setup is considered with an $8 \times 8$ BS UPA and with different numbers of MS antennas. The figure illustrates how the performance gap between hybrid precoding and beamsteering increases with increasing the number of MS antennas which coincides with Corollary \ref{Prop2}. This means that hybrid precoding has a higher gain over analog-only beamforming solutions in mmWave systems when large antenna arrays are employed at the MS's.

To illustrate the impact of RF quantization, the performance of hybrid precoding and analog-only beamsteering are evaluated in \figref{fig:AQ} with different numbers of quantization bits at the BS and MS. We consider the same setup of \figref{fig:Perfect_SNR} with $4 \times 4$ MS UPAs and when each channel has $L=3$ paths. As shown in the figure, the performance of the beamforming strategies degrades with decreasing the number of quantization bits. The gain, however, stays almost constant for the same number of antennas. The figure also shows that the number of quantization bits should increase with the antenna numbers to avoid significant performance degradations.
\begin{figure}[t]
\centering
\subfigure[center][{}]{
\includegraphics[width=0.47\columnwidth, height=180pt]{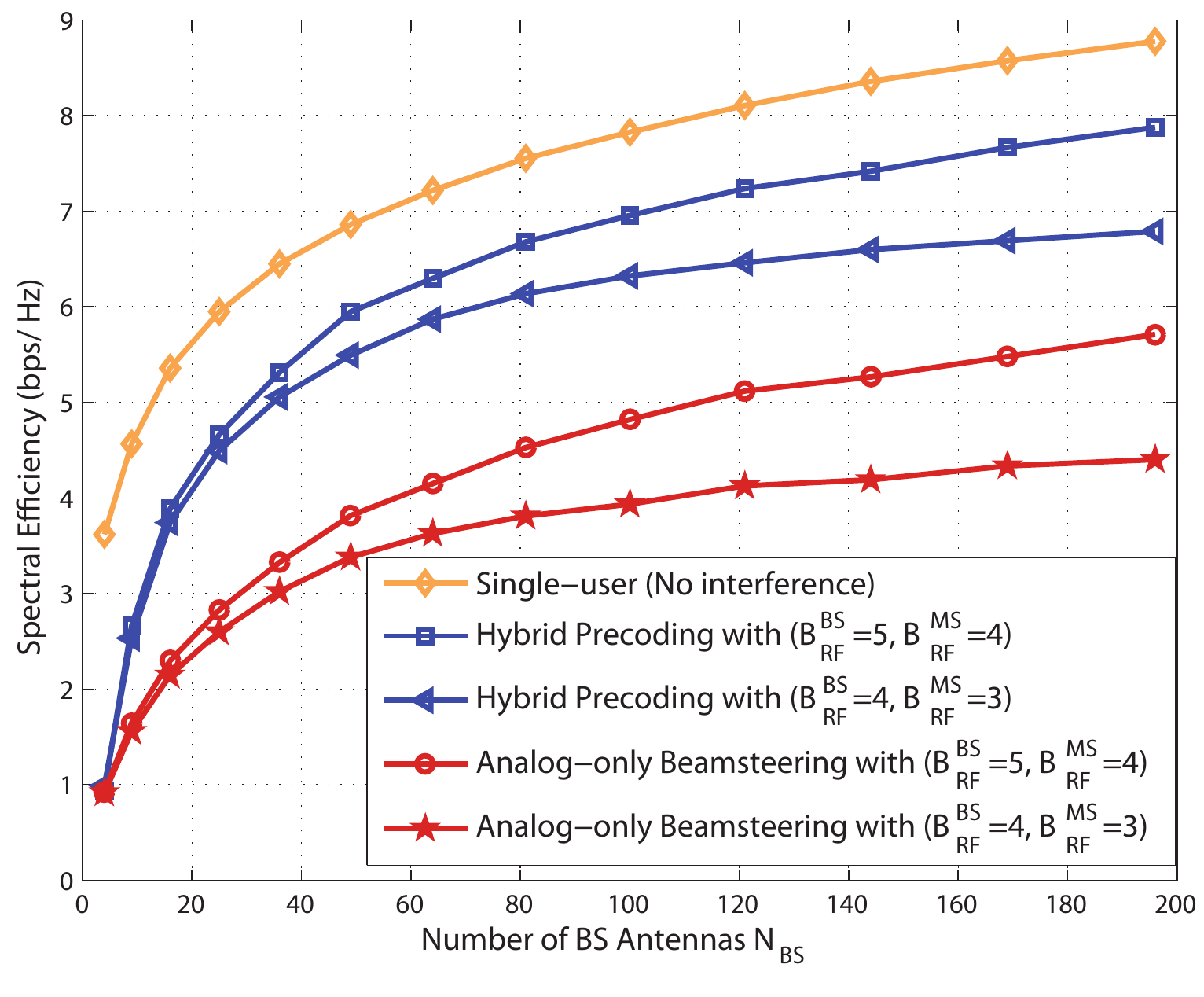}
\label{fig:AQ}}
\subfigure[center][{}]{
\includegraphics[width=0.47\columnwidth, height=180pt]{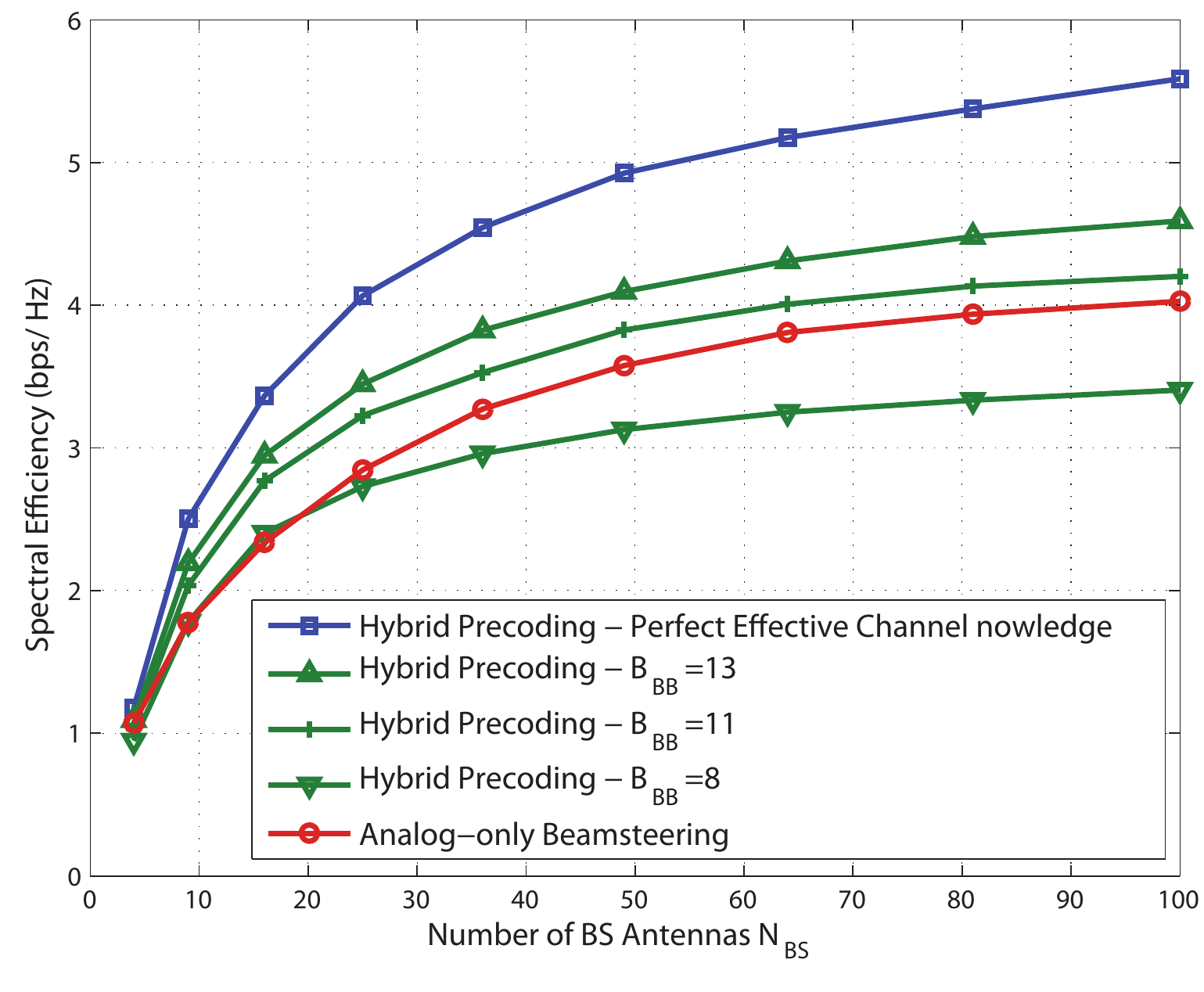}
\label{fig:RVQ}}
\caption{Achievable rates using the hybrid precoding and beamsteering algorithms are plotted for different numbers of RF beamforming quantization bits in (a), and for different numbers of effective channel quantization bits in (b). }
\label{fig:Fig5}
\vspace{-10pt}
\end{figure}

In \figref{fig:RVQ}, the case when both RF and baseband quantized codebooks exist is illustrated. For this  figure, the same system setup of \figref{fig:AQ} is adopted again, and the spectral efficiency achieved by hybrid precoding is shown for different sizes of the RVQ codebook used in quantizing the effective channels. The RF codebooks are also quantized with $B_\mathrm{RF}^\mathrm{BS}=3$ bits and $B_\mathrm{RF}^\mathrm{MS}=2$ bits. These results show  that when the effective channel is poorly quantized, the loss of multi-user interference management is larger than its gain, and using analog-only beamsteering achieves better rates. For reasonable numbers of effective channel quantization bits, however, the performance of hybrid precoding maintains its gain over the described analog-only solutions.

Finally, \figref{fig:Coverage} evaluates the performance of the proposed hybrid precoding algorithm in a mmWave cellular setup including inter-cell interference, which is not explicitly incorporated into our designs. In this setup, BS's and MS's are assumed to be spatially distributed according to a Poisson point process with MS's densities 30 times the BS densities. The channels between the BS's and MS's are  single-path and each link is determined to be line-of-sight or non-line-of-sight based on the blockage model in \cite{Cov_Magazine}. Each MS is associated to the BS with less path-loss and the BS randomly selects $ n=2,..,5$ users of those associated to it to be simultaneously served. BS's are assumed to have $8 \times 8$ UPAs and MS's are equipped with $4 \times 4$ UPAs. All UPA's are vertical, elevation angles are assumed to be fixed at $\pi/2$, and azimuth angles are uniformly distributed in $[0, 2 \pi]$. \figref{fig:Coverage} shows the per-user coverage probability defined as $\mathcal{P}\left(\mathrm{R_u \geq \eta}\right)$, where $\eta$ is an arbitrary threshold. This figure illustrates that hybrid precoding has a reasonable coverage gain over analog-only beamsteering, especially when large numbers of users are simultaneously served, thanks to the interference management capability of hybrid precoding.

\begin{figure}[t]
\centerline{
\includegraphics[scale=.49]{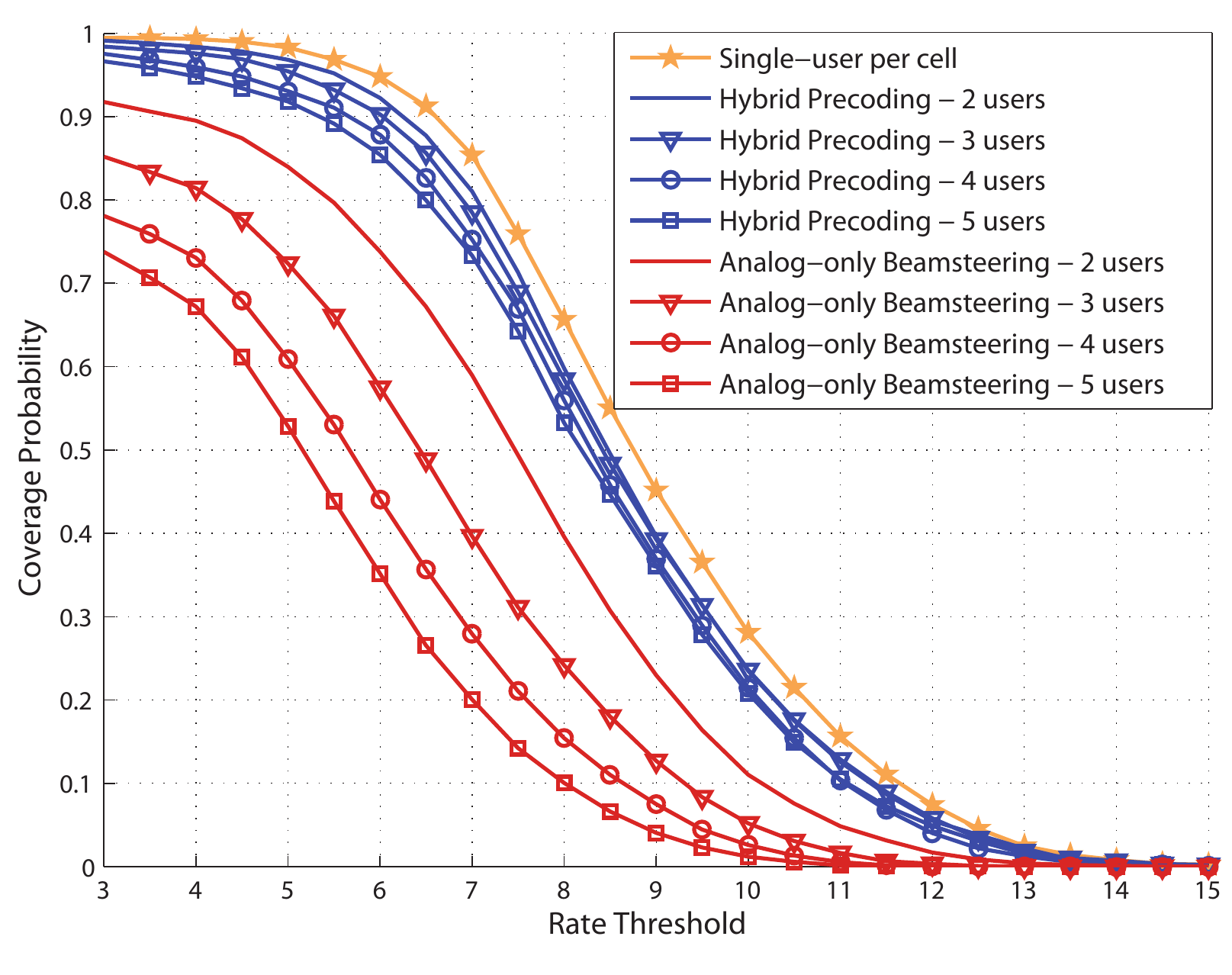}
}
\caption{Coverage probability of the proposed hybrid precoding algorithm compared with single-user per cell and analog-only beamsteering solutions. The figure shows the per-user performance with different numbers of users per cell.}
\label{fig:Coverage}
\vspace{-10pt}
\label{fig:Fig6}
\end{figure}

\section{Conclusions}\label{sec:conclusion}
In this paper, we proposed a low-complexity hybrid analog/digital precoding algorithm for downlink multi-user mmWave systems leveraging the sparse nature of the channel and the large number of deployed antennas. The performance of the proposed algorithm was analyzed when the channels are single-path and when the system dimensions are very large. In these cases, the asymptotic optimality of the proposed algorithm, and the gain over beamsteering solutions were illustrated. The results indicate that interference management in multi-user mmWave systems is required even when the number of antennas is very large. When the feedback channels are limited, the average rate loss due to joint analog/digital codebook quantization was analyzed and numerically simulated. These simulations show that the hybrid precoding gain is not very sensitive to RF angles quantization. It is important, however, to have a good quantization for the digital precoding layer to maintain a reasonable precoding gain over analog only solutions. As a future work, it is of interest to develop efficient mmWave precoding and channel estimation algorithms for multi-user cellular systems taking into consideration the out-of-cell interference.
\appendices
\section{}\label{app:power}
\textit{Power constraints of the digital precoder}: To satisfy the hybrid precoding power constraints, we need to adjust the diagonal elements in $\boldsymbol\Lambda$ such that $\left\|\bF_\mathrm{RF}\bff_u^\mathrm{BB}\right\|_2^2=1, u=1, 2, ..., U$. If we set ${\boldsymbol\Lambda}_{u,u}=\sqrt{\frac{N_\mathrm{BS} N_\mathrm{MS} }{ \left(\bA_\mathrm{BS}^* \bA_\mathrm{BS}\right)^{-1}_{u,u}}}\left|\alpha_u\right|$, and use the effective channel definition in \eqref{eq:eff_channel}, we get

\small
\begin{align}
\left\|\bF_\mathrm{RF}\bff_u^\mathrm{BB}\right\|_2^2&=\left[\boldsymbol\Lambda\right]_{u,:}\left(\overline{\bH}_u \overline{\bH}_u^*\right)^{-1} \overline{\bH}_u \bF_\mathrm{RF}^* \bF_\mathrm{RF}\overline{\bH}_u^* \left(\overline{\bH}_u \overline{\bH}_u^*\right)^{-1} \left[\boldsymbol\Lambda\right]_{:,u}\\
&\stackrel{}{=}\left[\boldsymbol\Lambda\right]_{u,:}\left(\bD^*\right)^{-1} \left( \bA_\mathrm{BS}^* \bA_\mathrm{BS} \bA_\mathrm{BS}^* \bA_\mathrm{BS} \right)^{-1} \bA_\mathrm{BS}^* \bA_\mathrm{BS} \bA_\mathrm{BS}^* \bA_\mathrm{BS} \bA_\mathrm{BS}^* \bA_\mathrm{BS} \\
& \hspace{40pt} \times \left(\bA_\mathrm{BS}^* \bA_\mathrm{BS} \bA_\mathrm{BS}^* \bA_\mathrm{BS} \right)^{-1} \left(\bD\right)^{-1} \left[\boldsymbol\Lambda\right]_{:,u}\\
\hspace{10pt}&=\left[\boldsymbol\Lambda\right]_{u,:}\left(\bD^*\right)^{-1}  \left( \bA_\mathrm{BS}^* \bA_\mathrm{BS} \right)^{-1} \left(\bD\right)^{-1} \left[\boldsymbol\Lambda\right]_{:,1}\\
&=\left[\boldsymbol\Lambda\right]_{u,u}^2 \left[\bD\right]^{-2}_{u,u}  \left( \bA_\mathrm{BS}^* \bA_\mathrm{BS} \right)^{-1}_{u,u} \\
&=1.
\end{align}
\section{}\label{app:Cor1}
\normalsize
\begin{proof}[Proof of Proposition \ref{Prop1}]  The difference between the average rate of user $u$ and the single-user rate  can be written as
\small
\begin{align}
\bbE\left[ \mathring{R}-R_u\right]&\stackrel{(a)}{\leq} \bbE\left[\log2\left(1+ \frac{SNR}{  U}N_\mathrm{BS} N_\mathrm{MS} \left|\alpha_u\right|^2\right)-\log_2\left( 1+\frac{SNR}{ U}{N_\mathrm{BS} N_\mathrm{MS} \left|\alpha_u\right|^2} G\left(\left\{\phi_u\right\}_{u=1}^U\right) \right)\right]\\
& \leq \bbE\left[- \log_2\left(G\left(\left\{\phi_u\right\}_{u=1}^U\right)\right) \right] \\
& \stackrel{(b)}{=} K\left(N_\mathrm{BS},U\right)
\end{align}
\normalsize
where (a) follows from the lower bound of the achievable rate in Theorem \ref{Th1}, and (b) is due to the definition of $G\left(\left\{\phi_u\right\}_{u=1}^U\right)$ which depends only on the steering vectors, and the user AoDs.

To prove the second part of Proposition \ref{Prop1}, denote $\beta_{u,n}=\ba^*_\mathrm{BS}\left(\phi_u\right)\ba_\mathrm{BS}\left(\phi_n\right)$. Then, we note that the off-diagonal elements of the matrix $\bA_\mathrm{BS}^* \bA_\mathrm{BS}$, i.e., $\beta_{u,n}, n \neq u$ satisfy lemma 1 in \cite{ElAyache2} which leads to:
$\lim_{N_\mathrm{BS}\rightarrow \infty} \beta_{u,n}=0$ with probability one, $\forall n \neq u$.
Consequently, we have $\lim_{N_\mathrm{BS}\rightarrow \infty} \bA_\mathrm{BS}^* \bA_\mathrm{BS}=\bI_U$ with probability one. Hence, the matrix $\bA_\mathrm{BS}$ becomes semi-unitary with $\sigma_\mathrm{max}\left(\bA_\mathrm{BS}\right)=\sigma_\mathrm{min}\left(\bA_\mathrm{BS}\right)=1$ which results in $G\left(\left\{\phi_u\right\}_{u=1}^U\right)=1$ and $R_u=\mathring{R}$ with probability one.
\end{proof}
\section{}\label{app:Prop2}
\begin{proof}[Proof of Corollary \ref{Prop2}]
The gain of the achievable rate using Algorithm \ref{alg:MU_Precoding} over the rate, $R_\mathrm{BS}$, achieved by having only an RF beamsteering precoder $\bF_\mathrm{RF}=\bA_\mathrm{BS}$ is bounded by
\small
\begin{align}
R_u-R_\mathrm{BS} &\stackrel{(a)}{\geq} \log_2\left(\frac{SNR}{ U} N_\mathrm{BS} N_\mathrm{MS} \left|\alpha_u\right|^2 G\left(\left\{\phi_u\right\}_{u=1}^U\right) \right)-\log_2\left(1+\frac{\frac{SNR}{ U}  N_\mathrm{BS} N_\mathrm{MS} \left|\alpha_u\right|^2 }{\frac{SNR}{ U}  N_\mathrm{BS} N_\mathrm{MS} \left|\alpha_u\right|^2 \sum_{n \neq u} \left|\beta_{u,n}\right|^2+1}\right)\\
&=\log_2\left(N_\mathrm{MS}\right)+\log_2\left(\frac{SNR}{ U} N_\mathrm{BS}\left|\alpha_u\right|^2 G\left(\left\{\phi_u\right\}_{u=1}^U\right) \right)\\
&-\log_2\left(1+\frac{\frac{SNR}{ U}  N_\mathrm{BS}  \left|\alpha_u\right|^2 }{\frac{SNR}{ U}  N_\mathrm{BS}  \left|\alpha_u\right|^2 \sum_{n \neq u} \left|\beta_{u,n}\right|^2+\frac{1}{N_\mathrm{MS}}}\right),
\end{align}
\normalsize
where (a) follows from the hybrid precoding rate lower bound in Theorem \ref{Th1}, and the rate using the beamsteering alone. Now, taking the expectation over the channel gain  and AoDs, and setting $N_\mathrm{MS}\rightarrow \infty$, we get the first term goes to infinity and the others terms become constant.
\end{proof}
\section{}\label{app:Prop_Finite}
\begin{proof}[Proof of Theorem \ref{Th2}]
Consider the system model described in \sref{sec:Model}, with the hybrid analog/digital precoders and RF combiners designed using Algorithm \ref{alg:MU_Precoding}. In the first stage, the BS and each MS $u$ selects $\bv_u^\star$ and $\bg_u^\star$  that solve \eqref{eq:Opt}.

Given the quantized codebooks $\cF, \cW$ and the single-path channels, $\bg_u^\star=\ba_\mathrm{MS}\left(\hat{\theta}_u\right)$, where $\hat{\theta}_u$ is the quantized angle in the MS beamsteering codebook $\cW$ that maximizes $\left|\ba^*_\mathrm{MS}\left(\hat{\theta}_u\right) \ba_\mathrm{MS}\left(\theta_u\right)\right|$. Similarly, $\bv_u^\star=\ba_\mathrm{BS}\left(\hat{\phi_u}\right)$ where $\hat{\phi}_u$ is the quantized angle in the BS beamsteering codebook $\cF$ that maximizes $\left|\ba^*_\mathrm{BS}\left(\hat{\phi}_u\right) \ba_\mathrm{MS}\left(\phi_u\right)\right|$. Finally, define $\mu_\mathrm{MS}\left(\hat{\theta}_u,\theta_u\right)=\ba^*_\mathrm{MS}\left(\hat{\theta}_u\right) \ba_\mathrm{MS}\left(\theta_u\right)$ and $\mu_\mathrm{BS}\left(\hat{\phi}_u,\phi_u\right)=\ba^*_\mathrm{BS}\left(\hat{\phi}_u\right) \ba_\mathrm{BS}\left(\phi_u\right)$.

The BS can then construct its RF precoding matrix $\bF_\mathrm{RF}=\left[\ba_\mathrm{BS}\left(\hat{\theta}_1\right), \ba_\mathrm{BS}\left(\hat{\theta}_2\right), ..., \ba_\mathrm{BS}\left(\hat{\theta}_U\right)\right]$, and the effective channel of user $u$ is written as
\small
\begin{equation}
\overline{\bh}_u^\mathrm{Q}=\sqrt{N_\mathrm{BS} N_\mathrm{MS}} \alpha_u \mu_\mathrm{MS} \left(\hat{\theta}_u,\theta_u\right) \mu_\mathrm{BS}^*\left(\hat{\phi}_u,\phi_u\right) \left[\beta_{u,1}^\mathrm{Q}, \beta_{u,2}^\mathrm{Q}. ..., \beta_{u,U}^\mathrm{Q}\right],
\label{eq:eff_q_channel}
\end{equation}
\normalsize
where
\small
\begin{equation}
\beta_{u,n}^\mathrm{Q}=\frac{\ba^*_\mathrm{BS}\left(\phi_u\right) \ba_\mathrm{BS}\left(\hat{\phi}_n\right)}{\mu_\mathrm{BS}^*\left(\hat{\phi}_u,\phi_u\right)}.
\label{eq:qu_beta}
\end{equation}
\normalsize

In the second stage of Algorithm \ref{alg:MU_Precoding}, each MS gets perfect knowledge of its effective channel $\overline{\bh}_u^\mathrm{Q}$, which includes the effect of the quantized beamsteering directions. Next, this MS quantizes its normalized effective channel $\widetilde{\bh}_u^\mathrm{Q}=\frac{\overline{\bh}_u^\mathrm{Q}}{\left\|\overline{\bh}_u^\mathrm{Q}\right\|}$ using the RVQ codebook $\cH$, and selects $\widehat{\bh}_u^\mathrm{Q}$ that solves: $\widehat{\bh}_u^\mathrm{Q}=\arg\max_{\bg \in \cH} \left|{\widetilde{\bh}_u^{\mathrm{Q}^*} } \bg \right|$.

\normalsize
Based on this quantized effective channel, the BS builds its digital zero-forcing precoder, and normalizes each column of it, $\widehat{\bff}_u^\mathrm{BB}$, similar to \sref{subsec:single_path}.

If $R_u^\mathrm{Q}$ denotes the resulting rate of user $u$, then the average rate loss compared with the rate without RF and baseband quantization, $\overline{\Delta R}_u$, can be written as
\small
\begin{align}
\overline{\Delta R}_u&=\bbE\left[\log_2\left(1+\frac{\mathsf{SNR}}{U}\left|\overline{\bh}_u^* \bff_u^\mathrm{BB}\right|^2\right)-\log2\left(1+\frac{\frac{\mathsf{SNR}}{U}\left|\overline{\bh}_u^{\mathrm{Q}^*} \widehat{\bff}_u^\mathrm{BB}\right|^2}{\frac{\mathsf{SNR}}{U} \sum_{n \neq u}^U \left|\overline{\bh}_u^{\mathrm{Q}^*} \widehat{\bff}_n^\mathrm{BB}\right|^2 +1}\right)\right], \\
&\begin{aligned}&=\bbE\left[\log_2\left(1+\frac{\mathsf{SNR}}{U}\left|\overline{\bh}_u^* \bff_u^\mathrm{BB}\right|^2\right)-\log2\left(1+\frac{\mathsf{SNR}}{U} \sum_{n \neq u}^U \left|\overline{\bh}_u^{\mathrm{Q}^*} \widehat{\bff}_n^\mathrm{BB}\right|^2 + \frac{\mathsf{SNR}}{U}\left|\overline{\bh}_u^{\mathrm{Q}^*} \widehat{\bff}_u^\mathrm{BB}\right|^2 \right)\right]\\
&+\bbE\left[\log_2\left(1+\frac{\mathsf{SNR}}{U} \sum_{n \neq u}^U \left|\overline{\bh}_u^{\mathrm{Q}^*} \widehat{\bff}_n^\mathrm{BB}\right|^2 \right)\right],
\end{aligned}\\
&\begin{aligned}&\stackrel{(a)}{\leq}\bbE\left[\log_2\left(\frac{1+\frac{\mathsf{SNR}}{U}\left|\overline{\bh}_u^* \bff_u^\mathrm{BB}\right|^2}{1+ \frac{\mathsf{SNR}}{U}\left|\overline{\bh}_u^{\mathrm{Q}^*} \widehat{\bff}_u^\mathrm{BB}\right|^2} \right)\right]
+\bbE\left[\log_2\left(1+\frac{\mathsf{SNR}}{U} \sum_{n \neq u}^U \left|\overline{\bh}_u^{\mathrm{Q}^*} \widehat{\bff}_n^\mathrm{BB}\right|^2 \right)\right],
\end{aligned}\\
&\begin{aligned}&\stackrel{(b)}{\leq}\bbE\left[\log_2\left(\frac{\left\|\overline{\bh}_u \right\|^2}{\left\|\overline{\bh}_u^{\mathrm{Q}}\right\|^2}\right)\right]+\bbE\left[\log_2\left(\frac{\mathsf{SNR}}{U} \left|\widetilde{\bh}_u^* \bff_u^\mathrm{BB}\right|^2 \right)\right] \\
& - \bbE\left[\log_2\left(\frac{\mathsf{SNR}}{U}\left|\widetilde{\bh}_u^{\mathrm{Q}^*} \widehat{\bff}_u^\mathrm{BB}\right|^2 \right)\right] +\bbE\left[\log_2\left(1+\frac{\mathsf{SNR}}{U} \sum_{n \neq u}^U \left|\overline{\bh}_u^{\mathrm{Q}^*} \widehat{\bff}_n^\mathrm{BB}\right|^2 \right)\right],
\end{aligned}
\end{align}
\normalsize
where (a) resulted from removing the positive quantity $\frac{\mathsf{SNR}}{U} \sum_{n \neq u}^U \left|\overline{\bh}_u^{\mathrm{Q}^*} \widehat{\bff}_n^\mathrm{BB}\right|^2$ from the second term, and (b) is by noting that  any positive numbers $x, y$, with $x>y$, satisfy $\log(1+x)-\log(1+y) \leq \log(x)-\log(y)$. Next, as the zero-forcing baseband precoding vectors $\bff_u^\mathrm{BB},\widehat{\bff}_u^\mathrm{BB}$ are designed to be in the null space of the other users' channel vectors, we have $\bff_u^\mathrm{BB}$, and $\widehat{\bff}_u^\mathrm{BB}$ independent from $\widetilde{\bh}_u$, and $\widetilde{\bh}_u^\mathrm{Q}$ respectively, and the expectations of their projections are equal which yields \cite{Jindal, Raghavan}.
\small
\begin{align}
\overline{\Delta R}_u&{=}\bbE\left[\log_2\left(\frac{1+\sum_{n \neq u}{\left|\beta_{u,n}\right|^2}}{\left|\mu_\mathrm{MS}\left(\hat{\theta}_u,\theta_u\right)\right|^2\left|\mu_\mathrm{BS}\left(\hat{\phi}_u,\phi_u\right)\right|^2 \left(1+\sum_{n \neq u}{\left|\beta_{u,n}^\mathrm{Q}\right|^2}\right)}\right)
+\log_2\left(1+\frac{\mathsf{SNR}}{U} \sum_{n \neq u}^U \left|\overline{\bh}_u^{\mathrm{Q}^*} \widehat{\bff}_n^\mathrm{BB}\right|^2 \right)\right],\\
&\begin{aligned}&\stackrel{(a)}{\leq}\bbE\left[\log_2\left(\frac{1}{\left|\mu_\mathrm{MS}\left(\hat{\theta}_u,\theta_u\right)\right|^2\left|\mu_\mathrm{BS}\left(\hat{\phi}_u,\phi_u\right)\right|^2  }\right)\right] +\bbE\left[\log_2\left(1+\frac{\mathsf{SNR}}{U} \sum_{n \neq u}^U \left|\overline{\bh}_u^{\mathrm{Q}^*} \widehat{\bff}_n^\mathrm{BB}\right|^2 \right)\right],
\end{aligned}
\end{align}
\normalsize
where (a) follows from the definition of $\beta_{u,n}^\mathrm{Q}$ in \eqref{eq:qu_beta} which means that $\bbE\left[\log_2\left(1+\sum_{n \neq u}{\left|\beta_{u,n}\right|^2}\right)\right] \leq \bbE\left[\log_2\left(1+\sum_{n \neq u}{\left|\beta_{u,n}^\mathrm{Q}\right|^2}\right)\right]$.

Now, we define the bounds $\left|\mu_\mathrm{MS}\left(\hat{\theta}_u,\theta_u\right)\right| \geq \left|\overline{\mu}_\mathrm{BS}\right|=\displaystyle{\min_{\bff_u \in \cF} \max_{\bff_n \in \cF}} {\left| \bff_u^* \bff_n\right|}$ and $\left|\mu_\mathrm{BS}\left(\hat{\phi}_u,\phi_u\right)\right| \geq \left|\overline{\mu}_\mathrm{MS}\right|=\displaystyle{\min_{\bw_u \in \cW} \max_{\bw_n \in \cW}} {\left| \bw_u^* \bw_n\right|}$. Using these bounds and applying Jensen's inequality, we get
\small
\begin{align}
\overline{\Delta R}_u & \leq \log_2\left(\frac{1}{\left|\overline{\mu}_\mathrm{MS}\right|^2\left|\overline{\mu}_\mathrm{BS}\right|^2  }\right) +\log_2\left(1+\frac{\mathsf{SNR}}{U} (U-1) \bbE\left[\left\|\overline{\bh}_u^{\mathrm{Q}^*}\right\|^2\right] \bbE\left[\left|\widetilde{\bh}_u^{\mathrm{Q}^*} \widehat{\bff}_n^\mathrm{BB}\right|^2\right] \right), \\
&\begin{aligned} &\stackrel{(a)}{\leq} \log_2\left(\frac{1}{\left|\overline{\mu}_\mathrm{MS}\right|^2\left|{\mu}_\mathrm{BS}\right|^2  }\right) \\
&+\log_2\left(1+\frac{\mathsf{SNR}}{ U} (U-1) N_\mathrm{BS}N_\mathrm{MS} \left|\mu_\mathrm{MS}\left(\hat{\theta}_u,\theta_u\right)\right|^2 \bar{\alpha} \left(\left|\mu_\mathrm{BS}\left(\hat{\phi}_u,\phi_u\right)\right|^2+\frac{U-1}{N_\mathrm{BS} } \right)\bbE\left[\left|\widetilde{\bh}_u^{\mathrm{Q}^*} \widehat{\bff}_n^\mathrm{BB}\right|^2\right] \right),
\end{aligned} \\
&\begin{aligned} &\stackrel{(b)}{\leq} \log_2\left(\frac{1}{\left|\overline{\mu}_\mathrm{MS}\right|^2\left|\overline{\mu}_\mathrm{BS}\right|^2  }\right)
+\log_2\left(1+\frac{\mathsf{SNR}}{ U} (U-1) N_\mathrm{BS}N_\mathrm{MS} \bar{\alpha} \left(1+\frac{U-1}{N_\mathrm{BS}} \right)\bbE\left[\left|\widetilde{\bh}_u^{\mathrm{Q}^*} \widehat{\bff}_n^\mathrm{BB}\right|^2\right] \right),
\end{aligned} \\
&\begin{aligned} &\stackrel{(c)}{=} \log_2\left(\frac{1}{\left|\overline{\mu}_\mathrm{MS}\right|^2\left|\overline{\mu}_\mathrm{BS}\right|^2  }\right) +\log_2\left(1+\frac{\mathsf{SNR}}{U} N_\mathrm{BS}N_\mathrm{MS}  \bar{\alpha} \left(1+\frac{U-1}{N_\mathrm{BS} } \right)\bbE\left[ \arg\min_{\substack{\bg \in \cH \\ \left|\cH\right|=2^{B_\mathrm{BB}}}} \sin^2\left(\widetilde{\bh}_u^\mathrm{Q},\bg\right) \right]\right),
\end{aligned} \\
&\begin{aligned} \stackrel{(d)}{\leq}\log_2\left(\frac{1+\frac{\mathsf{SNR}}{ U} N_\mathrm{BS} N_\mathrm{MS} \bar{\alpha} \left(1+\frac{U-1}{N_\mathrm{BS} }\right) 2^{-\frac{B_\mathrm{BS}}{U-1}}}{\left|\overline{\mu}_\mathrm{BS}\right|^2 \left|\overline{\mu}_\mathrm{MS}\right|^2}\right),
\end{aligned}
\end{align}
\normalsize
where (a) is by taking the expectation of the effective channel in \eqref{eq:eff_q_channel}, and using the lower bound on $\left|\mu_\mathrm{MS}\left(\hat{\theta}_u,\theta_u\right)\right|$ and $\left|\mu_\mathrm{BS}\left(\hat{\phi}_u,\phi_u\right)\right|$. (b) is by using the upper bound on $\left|\mu_\mathrm{MS}\left(\hat{\theta}_u,\theta_u\right)\right| \leq 1$ and $\left|\mu_\mathrm{BS}\left(\hat{\phi}_u,\phi_u\right)\right| \leq 1$. Now, using a similar trick as in \cite{Jindal,Raghavan}, we can write $\widetilde{\bh}_u^\mathrm{Q}=\sqrt{1-a}\widehat{\bh}_u+a \bz$, with $\bz$ a unit-norm vector in the null-space of $\widehat{\bh}_u$, and $a=\sin^2\left(\widetilde{\bh}_u^\mathrm{Q},\widehat{\bh}_u\right)$. Exploiting the orthogonality between $\widehat{\bff}_n^\mathrm{BB}$, and $\widehat{\bh}_u$, we have $\left|\widetilde{\bh}_u^{\mathrm{Q}^*} \widehat{\bff}_n^\mathrm{BB}\right|^2=a \left|\bz^* \widehat{\bff}_n^\mathrm{BB}\right|^2$. As both $\bz$ and $\widehat{\bff}_n^\mathrm{BB}$ are independent and isotropically independently in the $(U-1)$-dimensional null-space of $\widehat{\bh}_u$, we get $\bbE\left[\left|\bz^* \widehat{\bff}_n^\mathrm{BB}\right|^2\right]=\frac{1}{U-1}$ which results in (c). Finally, as the effective channel in \eqref{eq:eff_q_channel} is distributed only in a sub-space of the K-dimensional space in which the codewords in $\cH$ are uniformly distributed, the average quantization error can not be greater than the case when the channel vector is distributed in the entire space, which is upper bounded by $2^{-\frac{B_\mathrm{BB}}{U-1}}$ \cite{Jindal}.
\end{proof}

\linespread{1.3}

\end{document}